\let\color@begingroup\relax
   \let\color@endgroup\relax}{}%
\def\fix@ieeecolor@hbox#1{%
  \hbox{\color@begingroup#1\color@endgroup}}
\patchcmd\@makecaption{\hbox}{\fix@ieeecolor@hbox}{}{\FAILED}
\patchcmd\@makecaption{\hbox}{\fix@ieeecolor@hbox}{}{\FAILED} %patch for color
\newtheorem{definition}{Definition}
\newtheorem{theorem}{Theorem}
\newtheorem{lemma}{Lemma}
\newtheorem{assumption}{Assumption}
\newtheorem{sassumption}{Standing Assumption}
\newtheorem{proposition}{Proposition}
\newtheorem{remark}{Remark}
\newcommand{\Rl}{\ensuremath{\mathbb{R}}}   % Euclidean space
\newcommand{\Rlp}{\ensuremath{\mathbb{R}_{>0}}}
\newcommand{\Rlo}{\ensuremath{\mathbb{R}_{\geq 0}}}
\newcommand{\Zo}{\ensuremath{\mathbb{Z}_{\geq 0}}}
\newcommand{\Zp}{\ensuremath{\mathbb{Z}_{> 0}}}
\newcommand{\Ex}{\mathbb{E}}
\title{Transmission power policies for energy-efficient wireless control of nonlinear systems}
\author{Vineeth S. Varma, Romain Postoyan, Daniel E. Quevedo and Irinel-Constantin Mor\u{a}rescu
\thanks{V.S. Varma, R. Postoyan and I-C. Mor\u{a}rescu are with Universit\'e de Lorraine, CNRS, CRAN, F-54000 Nancy, France, {\small \tt  vineeth.satheeskumar-varma@univ-lorraine.fr}.  D.E. Quevedo is with the School of Electrical Engineering and Robotics, Queensland University of Technology, Brisbane, Australia, {\small  \tt  dquevedo@ieee.org}.\newline This work was supported by ANR through the grants HANDY no. ANR-18-CE40-0010 and NICETWEET, no. ANR-20-CE48-0009.}}
\begin{document}
\maketitle
\thispagestyle{empty}
\pagestyle{empty}
\begin{abstract} \textcolor{black}{We present an emulation-based controller and transmission policy design procedure for nonlinear wireless networked control systems.} The objective is to ensure the stability of the closed-loop system, in a stochastic sense, together with given control performance, while minimizing the average power used for communications. The controller is designed by emulation, i.e., ignoring the network, and the transmission power is given by threshold policies. These policies involve waiting a given amount of time since the last successful transmission instant, as well as requiring that the measured wireless channel gain is above a given threshold, before attempting a new transmission. Two power control laws are investigated: i) a constant power and ii) a power level inversely proportional to the channel gain. We explain how to select the waiting time, the channel threshold and the power level to minimize the induced average communication power, while ensuring the desired control objectives.
\end{abstract}

%\begin{IEEEkeywords} 
%Wireless networked control systems, nonlinear systems, power control
%\end{IEEEkeywords}

%=====================
%
%
\section{Introduction}
This work aims at minimizing the energy consumption of wireless networks, which are being increasingly deployed in control systems \cite{ahlen2019toward}. Since 2011, about 2–6\% of the energy consumption worldwide arises from the communications and information industry, and a significant portion of this is contributed by the wireless and mobile communications companies \cite{wu2016green}. Improving the efficiency of this technology has therefore gained a rising amount of interest in recent years \cite{rault2014energy}. For mobile devices such as cellular phones, laptops, and mobile robots, smart and careful management of the energy utilized is essential due to the limited supply of energy available. For the case of fixed infrastructure connected to wireless networks, energy consumption has become a critical issue due to environmental and economic factors and has led to a large amount of research and publications \cite{hossain2012green,varma2013energy}.

In the wireless communication literature, various studies have investigated the design of energy-efficient communication systems to maximize the ratio of data rate to the energy consumed, or to minimize energy while maintaining a certain quality of service parameter, see \cite{rault2014energy} for an extensive survey. One of the most relevant techniques to improve energy efficiency is that of \emph{transmission power control}. In works like \cite{varma2013energy} and \cite{goodman2000power}, transmission power is optimized so that the ratio between the number of packets transmitted successfully to the power consumed is maximized. While these works are fully relevant in the context of regular communication systems, they are a priori not well-suited for wireless networked control systems (WNCS), which have different, specific requirements on control performance rather than maximizing data rates. 

\textcolor{black}{A few researchers have recently published results, which consider the problem above see e.g., \cite{de2010energy, rabi2008event,leong2017remote,quevedo2013power, li2013optimal}, with some of them utilizing power control \cite{leong2017remote,quevedo2013power, li2013optimal}. For example, an event-based power control policy using a threshold on the error covariance has been shown to perform optimally for state estimation in \cite{leong2017remote}. Energy-aware event-triggered strategies, in the sense that communications are only attempted when a state-dependent criterion holds, have recently been proposed for state-feedback controllers, see \cite{molin2009lqg,gatsis2014optimal,bala2020LQpower }. While event-based strategies are very promising, they require constant monitoring of the plant state (or output), which may be problematic in some set-ups for which time-triggered paradigms would be more appropriate.} Hence, when communication instants depend on time, instead of the state, results on communication energy minimization have been developed in \cite{varma2016energy}, assuming packets are always successfully transmitted but with varying costs, and in \cite{varma2019tac}, in which the average transmission power is minimized while ensuring the desired control performance for stochastic communication.
%See \cite{rabi2008event} for example, which mostly concentrates on estimation, or \cite{liu2004wireless} that proposes numerical heuristics for minimizing a quadratic control cost. An event-based policy using a threshold on the error co-variance has been shown to perform optimally for state estimation in \cite{leong2017remote}.  Power control has also been studied in the control literature, like \cite{quevedo2013power} and \cite{li2013optimal} for state estimation. On the other hand, \cite{molin2009lqg} and \cite{gatsis2014optimal} studied the problem of minimization of a cost, defined as the sum of the control cost and the wireless power, using state-feedback event-triggered controllers for linear systems. Recently, \cite{bala2020LQpower} proposed event-based transmission power policies for linear systems, which reduce the control cost for the same average transmission power as a constant power policy. In \cite{varma2016energy}, the energy minimization problem is studied for time-triggered control when communication is assumed to be always successful but with varying costs. In \cite{varma2019tac}, a framework for energy-efficient time-triggered control was proposed for discrete-time linear systems, in which the the average transmission power is minimized while ensuring the desired control performance. Despite significant progress, a lot of open issues remain to be solved for other types of transmission policies, plant and controller models, and set-ups.
Even though recent works like \cite{maass2020stochastic} and \cite{maass2021stochastic} explore power control for interference management in nonlinear WNCS over static channels, results for nonlinear systems are crucially lacking and the design of transmission policies over a time-varying channel are missing even for linear systems.

In this work, we propose transmission power policies for nonlinear discrete-time systems controlled over a wireless network. For this purpose, we develop \emph{threshold-based} transmission policies, i.e., transmissions are not attempted until a certain threshold is passed on i) the time elapsed since the last successful communication and ii) the measured channel quality (or channel gain). Transmissions are attempted with a power level determined by the considered power policy until the packet is received as long as these conditions are satisfied. We consider both constant power policies and channel inversion policies, wherein the power level is inversely proportional to the channel gain. While inversion policies are in general more efficient, some communication devices and protocols may not allow the transmission power to be controlled freely. In such cases, constant power policies are a relevant alternative. The control law, on the other hand, is based on emulation, i.e., it is designed disregarding the presence of the wireless link to ensure the desired control objective. This allows the user to utilize their favorite discrete-time control methodology. In particular, we merely require the controller to be such that the origin of the closed-loop system is uniformly globally asymptotically stable, with a known Lyapunov function. Regarding the set-up, we investigate output-based control systems in which the wireless link is used to communicate information from either the sensor to the controller, or from the controller to the actuator, but not when both links are over a wireless network.

The main contributions are the following.
\begin{itemize}
    \item We formulate a framework for the design of threshold-based transmission policies for nonlinear discrete-time systems, in contrast to several works that focus on transmission policies for linear systems like \cite{gatsis2014optimal, varma2019tac}.
    \item We provide a set characterizing the usable length of the time interval before any transmission is attempted after a successful communication, the channel threshold, and the transmission power, which guarantee stability and a desired convergence rate of a given Lyapunov function in a stochastic sense. 
    \item We then observe that the minimization of the average communication power, while ensuring the desired control property, is a non-convex problem for both constant power and channel inversion policies. Consequently, we elucidate the following relevant sub-cases over which the minimization problem is solved: i) pure-time based in which the power control is independent of the channel quality, ii) pure-channel based in which the power control is independent of the time since the last successful communication,  iii) almost sure communications in which the channel threshold and transmission power are such that communication is almost always successful when attempted and finally iv) unsaturated polices in which the channel thresholds are such that channel inversion results in a transmission power smaller than the maximum allowable one.
\end{itemize} 
 
%The rest of the paper is organized as follows. In Section \ref{sec:sysm}, we present the control and wireless communication models, and the main objectives. In Section \ref{sec:TTnocsi}, we propose time-based threshold policies which satisfy the desired objectives. The Lyapunov assumption we make in Section \ref{sec:sysm} is discussed in more details in Section \ref{sect:assumption}. In Section \ref{sec:nr}, we provide numerical illustrations of our method. Concluding remarks in Section \ref{sec:con} end the paper. All the major proofs are provided in the appendix.

Compared to the preliminary version of this work presented in \cite{varma2020time}, which investigated purely time-based thresholds and constant power policies, in the present work, we additionally propose channel-based thresholds and channel inversion power policies, and account for a time-varying wireless channel, which is a more realistic assumption.

The rest of the paper is organized as follows: In Section~\ref{sec:sysm} we formally state the problem and the main assumptions considered. In Section~\ref{sec:stab}, we provide sufficient conditions to ensure the desired stochastic stability and performance properties of the WNCS. Next, in Section \ref{sec:comm}, we formalize the optimization problem under the constraint imposed by the stochastic stability and performance requirement, and then derive explicit solutions for relevant special cases. In Section~\ref{sect:assumption}, we elaborate on one of the standing assumptions stated in Section \ref{sec:sysm}. Finally, we provide numerical illustrations of our proposed communication strategy in Section \ref{sec:nr} before concluding in Section \ref{sec:concl}.
\noindent

\textbf{Notation.} Let $\Rl :=(-\infty,\infty)$, $\Rlo:=[0,\infty)$, $\mathbb{Z}_{>0}:=\{1,2,\ldots\}$ and $\Zo:=\{0,1,2,\ldots\}$. We use $\Pr(\cdot)$ for the probability and $\Ex[\cdot]$ for the expectation taken over the relevant stochastic variables. A function $\alpha : \Rlo \to \Rlo $ is of class $\mathcal{K}_{\infty}$ ($\alpha \in \mathcal{K}_\infty$) if it is continuous, strictly increasing, $\alpha(0)=0$ and $\lim_{s \to \infty} \alpha(s)= \infty$. For any $x_1\in \mathbb{R}^{n_1}$ and $x_2\in \mathbb{R}^{n_2}$ with $n_1,n_2 \in \Zp$, $(x_1,x_2)$ stands for $(x_1^\top,x_2^\top)^\top \in \mathbb{R}^{n_1+n_2}$.

%and $\mathbf{1}(\cdot)$ for the indicator function, taking the value $1$ when the condition is satisfied and $0$ otherwise.

\section{Problem statement}
In this section, we first describe the plant and controller model, followed by the communication model, the threshold policies and finally the objectives.

\label{sec:sysm}
\subsection{Plant and controller model}
We consider the discrete-time plant model given by
\begin{equation}
\begin{array}{rllll}
x_p(t+1) & = & f_p(x_p(t),u(t))\\
y(t) & = & g_p(x_p(t)),
\end{array}
\label{eq:mainsys}
\end{equation}
where $t \in\Zo$ is the time, $x_p(t) \in \mathbb{R}^{s_p}$ is the plant state, $u(t)\in \mathbb{R}^{s_u}$ is the control input, $y(t) \in \mathbb{R}^{s_y}$ is the measured output used for control and $s_p,s_u,s_y \in \mathbb{Z}_{>0}$ are their respective dimensions.

We proceed by emulation, and thus assume that we know a stabilizing output-feedback controller for system \eqref{eq:mainsys} of the form
\begin{equation}
\begin{array}{rllll}
x_c(t+1) & = & f_c(x_c(t),y(t))\\ 
u(t)& = & g_c(x_c(t),y(t)),
\end{array}
\label{eq:maincontrol}
\end{equation}
where $x_c(t) \in \mathbb{R}^{s_c}$ is the controller state. When the controller is static, we simply have $u(t)=g_c(y(t))$ in \eqref{eq:maincontrol}.  At this stage, any controller design techniques can be employed to construct (\ref{eq:maincontrol}), like backstepping, feedback linearization etc.  The assumption we make on the closed-loop system (\ref{eq:mainsys})-(\ref{eq:maincontrol}) is formalized in the sequel. %In particular, we assume that \eqref{eq:maincontrol} has been synthesized to ensure the next assumption. 

\tikzstyle{block} = [draw, fill=blue!20, rectangle, 
    minimum height=6em, minimum width=5em]
\tikzstyle{sum} = [draw, fill=blue!20, circle, node distance=1cm]
\tikzstyle{input} = [coordinate]
\tikzstyle{output} = [coordinate]
\tikzstyle{pinstyle} = [pin edge={to-,thin,black}]

    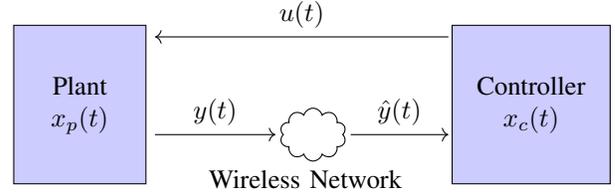
\begin{figure}[t]
    \centering
%	\vspace{.1cm}
				\centering
		\begin{tikzpicture}[auto,node distance=2cm]
    % We start by placing the blocks
 %   \node [input, name=input] {};
   
    \node at (0,0) [block] (controller) {\begin{tabular}{c} Plant \\ $x_p(t)$ \end{tabular}};
    % \node [block, below of=controller] (sum) {Wireless Network};
    \node at (6,0)[block] (system) {\begin{tabular}{c}
        Controller\\ $x_c(t)$   
    \end{tabular}};
    \node at (3.1,-.4) [cloud,draw, minimum height=2em, minimum width=2.5em] { };
    \node at (3,-1) [rectangle] {Wireless Network};
    % We draw an edge between the controller and system block to 
    % calculate the coordinate u. We need it to place the measurement block. 
    \draw [->] (4.9,0.9)  --  node[above]{$u(t)$} (1,0.9);
      \draw [->] (1,-.4)  --  node{$y(t)$} (2.6,-0.4);
      
        \draw [->] (3.6,-.4)  --  node{$\hat{y}(t)$} (4.9,-0.4);

\end{tikzpicture}
    \caption{Schematic of the networked control system.}
    \label{fig:ncs}
    \end{figure}
    
We are interested in the scenario where plant \eqref{eq:mainsys} and controller \eqref{eq:maincontrol} communicate over a wireless channel as illustrated in Figure \ref{fig:ncs}, specifically, the wireless link is used to communicate information from the sensors to the controller. As a result, the feedback loop is no longer closed at every time instant $t \in \Zo$, but only at the instants $t_k \in \mathcal{T} \subseteq \Zo, k \in \mathbb{Z}_{>0}$ when communication is successful. In the absence of communication, the controller uses a so-called \emph{networked version} \cite{nesic2004input} of the output measurement denoted by $\hat{y}$. Controller \eqref{eq:maincontrol} becomes in this context
\begin{equation}
\left(\begin{array}{c}
  x_c(t+1) \\
    u(t) \end{array}\right)
    =   \left\{ \hspace{-0.2cm}
\begin{array}{l}
 \left(\begin{array}{l} f_c(x_c(t),y(t)) \\ 
 g_c(x_c(t),y(t)) 
 \end{array}\right)
   \text{ for } t \in \mathcal{T}\\
 \left(\begin{array}{l} f_c(x_c(t),\hat{y}(t)) \\
 g_c(x_c(t),\hat{y}(t)) 
 \end{array}\right)  \text{ for } t \in \Zo \setminus \mathcal{T}.
\end{array}\right. 
\label{eq:maincontrol2}
\end{equation}
The networked version of the output $\hat{y}$ generated at the controller evolves according to the following dynamics
\begin{equation}
\hat{y}(t+1)  = \left\{
\begin{array}{ll}
 \hat{f}(g_p(x_p(t))) & \text{ if } t \in \mathcal{T}\\
 \hat{f}(\hat{y}(t)) & \text{ if } t \in \Zo \setminus \mathcal{T},
\end{array}\right.
\label{eq:holding}
\end{equation}
where $\hat{f}$ is the holding function applied, which can take various forms including the zero-order-hold strategy $\hat{f}(\hat{y})=\hat{y}$, or the zeroing policy $\hat{f}(\hat{y})=0$ for any $\hat{y} \in \mathbb{R}^{s_y}$. Note that $\hat y$ is never reset to the actual value of $y$ in \eqref{eq:holding}. This is in accordance with the way we model the closed-loop system in the sequel, in which $u(t)$ and $x_c(t+1)$ depends on $(x_c(t),y(t))$ when the packet is successfully received and on $(x_c(t),\hat{y}(t))$ only when the network packet is lost.

\begin{remark}
The results presented in this paper apply \emph{mutatis mutandis} when the network is located between the controller and the actuator, and not between the sensors and the controller as in Figure \ref{fig:ncs}, by changing the network variable to be $\hat{u}$ instead of $\hat{y}$. When the network is used in both directions, the analysis becomes quite convoluted, especially if communication events occur independently; this case is left for the future.  \hfill $\Box$
\end{remark}

Based on \eqref{eq:mainsys}-\eqref{eq:holding}, we introduce the concatenated state  $\chi:= (x_p,x_c,\hat{y}) \in \mathbb{R}^{s_{\chi}}$ with $s_{\chi}:=s_p+s_c+s_y$, and we write the closed-loop dynamics of the WNCS as
\begin{equation}
\chi(t+1) = \left\{
\begin{array}{ll}
 f_S(\chi(t)) & \text{ for } t \in \mathcal{T}\\
 f_U(\chi(t)) & \text{ for } t \in \Zo \setminus \mathcal{T},
\end{array}\right.
\label{eq:mainsys2}
\end{equation} 
where $f_S,f_U$ are defined as
\begin{equation}
f_S(\chi) :=  \left( \begin{array}{c}
      f_p(x_p,g_c(x_c,g_p(x_p)))  \\
      f_c(x_c,g_p(x_p))  \\
      \hat{f}(g_p(x_p))
  \end{array}  \right) , \label{eq:deffs}
\end{equation}
and 
\begin{equation}
    f_U(\chi) :=  \left( \begin{array}{c}
      f_p(x_p,g_c(x_c,\hat{y}))  \\
      f_c(x_c,\hat{y})  \\
      \hat{f}(\hat{y})
  \end{array}  \right). \label{eq:deffu}
\end{equation}
The standing assumptions (SA) we make on system \eqref{eq:mainsys2} are stated next.

\begin{sassumption}[SA1]
There exist $\overline{\alpha}, \underline{\alpha} \in \mathcal{K}_{\infty}$, $a_S \in [0,1)$, $a_U> a_S$ and $V: \mathbb{R}^{s_{\chi}} \to \Rlo$ such that, for any $\chi\in\mathbb{R}^{s_{\chi}}$,
\begin{subequations}
\begin{align}
\underline{\alpha}(|\chi|) \leq V(\chi) \leq \overline{\alpha}(|\chi|)\label{eq:ass-sandwich-bounds}\\
V(f_S(\chi)) \leq a_S V(\chi), \label{eq:a1ass1}\\
V(f_U(\chi)) \leq a_U V(\chi). \label{eq:a0ass1}
\end{align}
\end{subequations}
\hfill $\Box$
\label{ass:alpha1}
\end{sassumption}

Properties \eqref{eq:ass-sandwich-bounds} and \eqref{eq:a1ass1} imply that the origin of system $\chi(t+1)=f_S(\chi(t))$ is uniformly globally asymptotically stable (UGAS). This is typically the case when controller \eqref{eq:maincontrol} has been designed to ensure that the origin of system \eqref{eq:mainsys}-\eqref{eq:maincontrol} is UGAS, see Section \ref{sect:assumption}. The fact that the bound in \eqref{eq:a1ass1} is linear in $V$ comes with no loss of generality. Indeed, if we know a Lyapunov function which does not admit a linear bound as in \eqref{eq:a1ass1}, we can always modify it to satisfy \eqref{eq:ass-sandwich-bounds} and \eqref{eq:a1ass1}, under mild regularity assumptions, see Theorem 2 in \cite{hespanha2008lyapunov}. On the other hand, \eqref{eq:a0ass1} in SA\ref{ass:alpha1} imposes a condition on the growth rate of $V$ along solutions to \eqref{eq:mainsys2} when a transmission fails. Typically $a_U$ is strictly larger than $1$, and we assume $a_S < a_U$ implying that successful communications improve the guaranteed convergence speed of the Lyapunov function $V$ to zero, along the solutions to \eqref{eq:mainsys2}. Conditions ensuring the satisfaction of SA\ref{ass:alpha1} are discussed in more details in Section \ref{sect:assumption}, where we show that SA\ref{ass:alpha1} can always be ensured for detectable and stabilizable linear time-invariant systems.  

To conclude the description of the closed-loop system \eqref{eq:mainsys2}, we need to explain when a communication attempt is successful or not. 
%Then, $\hat{u}(t)$ is given by the following holding function
%\begin{equation}
%\hat{u}(t_i+\ell)  =  \left\{ \begin{array}{cc}
%    g(x(t_i))    & \text{ if } \ell=0  \\
%   h(x(t_i),\ell)    & \text{ if } \ell \in  \{0,1,\dots,t_{i+1}-t_i-1\}
 %   \end{array} \right. \label{eq:defhold}
%\end{equation}

\subsection{Communication setup}

In this sub-section, we describe the sequence of successful communication instants $t_k \in \mathcal{T}$. In wireless communication, the \emph{signal-to-interference plus noise ratio} (SINR) determines the probability of successful communication. The SINR is determined by 
\begin{enumerate}
    \item[i)] the transmission power $P(t) \in [0,P_{\max}]$ at time $t \in \Zo$, with $P_{\max}>0$ being the maximum transmission power allowed by the transmitter at any time,
    \item[ii)] the channel gain, which is an exogenous time-varying parameter,
    \item[iii)] and the power of the white noise, which is a constant we normalize to $1$.
\end{enumerate}

The channel gain is typically estimated by a feedback from the receiver after the transmitter sends pilot signals, which costs the transmitter some power. The estimated value of this quantity, which we call the \emph{channel measurement} (CM), is denoted by $h(t)$. In some cases, like in carrier-sense multiple access (CSMA), where the channel gain is used to represent the amount of interference in the medium, the transmitter simply senses the wireless medium to check for interference and this will not cost the transmitter any power. We use $q(t) \in \{0,1\}$ to express if the channel was estimated at time $t \in \Zp$ (indicated by $q(t)=1$) or not (indicated by $q(t)=0$). We make the following assumption for the CM, which is relaxed later in Section \ref{sec:TTnocsi}.

\begin{sassumption}[SA2]
 For any $t \in \Zo$, the CM $h(t) \in \mathcal{H}$, with $\mathcal{H}$ being a finite set, and it is exactly obtained by the transmitter when $q(t)=1$ by spending a fixed amount of power $P_S \in \Rlo$. \label{ass:estimateh} \hfill $\Box$
\end{sassumption}

Next, we make the following assumption regarding the probability of successfully receiving the packet at time $t \in \Zo$. 

\begin{sassumption}[SA3] The following holds.
\begin{enumerate}
    \item[(i)] The packet success rate, i.e., the probability of the communication attempt succeeding, is given by a known function $\psi(P(t)h(t))$, where $\psi: \Rlo \to [0,1]$. The mapping $\psi$ is: (i-a) differentiable, (i-b) strictly increasing on $\Rlo$, (i-c) initially convex and then concave, (i-d) $\psi(0)=0$ and $\lim_{\gamma \to \infty} \psi(\gamma)=1$.
    \item[(ii)] When a packet sent at time $t \in \mathbb{Z}_{>0}$ is received, the transmitter obtains an acknowledgement  before $t+1$ without any error. 
    \item[(iii)] The CM $h(t)$ is an i.i.d. random variable with a known probability distribution $\rho$, i.e., $\rho(h)=\Pr(h(t)=h)$ for all $h \in \mathcal{H}$. \hfill $\Box$
\end{enumerate} \label{ass:CSITpd}
\end{sassumption}

Item (i) of SA\ref{ass:CSITpd} models the packet error rate as a smooth time-invariant function of the transmission power, as is common in wireless communication literature \cite{wu2016green, varma2013energy}. The additional properties considered are quite standard in wireless literature, see \cite{goodman2000power}, \cite{rodriguez2003analytical} for example. On the other hand, most practical communication setups like, e.g., Wifi, 4G and 5G use some sort of ACK protocol so that item (ii) of SA\ref{ass:CSITpd} is reasonable. The ACK packets have a size of the order of a few bits and are typically much smaller than the control/output information packets, and can thus be assumed to be received without any loss \cite{varma2013energy}. \textcolor{black}{On the other hand, a simple (but conservative) way to incorporate ACK packet losses into our framework would be to include the ACK packet loss in the expression of $\psi$. This means that the communication will be seen as a failure if the ACK packet is not received. We will also see in Section \ref{sec:purech}, a transmission policy that does not require ACK signals to be implemented, thereby relaxing item (ii) of SA\ref{ass:CSITpd}. Finally, the channel gain is often assumed to be i.i.d. in wireless engineering, see Chapter 5 of the book on wireless communications in practice \cite{rappaport1996wireless}. Item (iii) of SA\ref{ass:CSITpd} follows as the CM is simply a quantization of the channel gain.}
%\vspace{-1cm}
\subsection{Threshold  policies}

We focus on \emph{threshold-based transmission} policies that determines the transmission power $P(t)$ at each instant $t \in \Zp$. In particular, we impose a threshold on the time steps since the last successful transmission and on the CM $h(t)$. The former implies that communication is attempted only when a certain number of time instants have elapsed since the last successful communication, which is known by the transmitter in view of item (ii) in SA\ref{ass:CSITpd}. To model this number, we introduce the clock $\tau(t)\in \Zp$ for all $t \in \Zp$, which counts the number of time instants elapsed since the last successful communication as follows
\begin{equation}\label{eq:taud}
\tau(t+1) =  \left\{ \begin{array}{ll}
1 & \text{for } t \in \mathcal{T} \\
\tau(t)+1 \hfill & \text{for } t \in \Zo \setminus \mathcal{T}.\\
\end{array} \right.
\end{equation}
We assume that the initial time is a successful communication instant, i.e., we set $t_1=0$ resulting in $0 \in \mathcal{T}$ and $\tau(0)=1$.

We use $\mathcal{P}: \Rlo \to [0,P_{\max}]$ to denote the power control function, which will be designed in Section \ref{sec:stab}, i.e., the transmission power used when the CM $h(t)$ is above a certain threshold $\bar{h}$. The considered class of transmission policies can then be written for any $t \in \Zp$ as
\begin{equation}
    P(t) = \left\{ \begin{array}{ll}
      \mathcal{P}(h(t)) &  \text{if }  h(t) \geq \bar{h} \text{ and } \tau(t) \geq n+1\\
       0  & \text{otherwise.} 
    \end{array} \right.
    \label{eq:TCTPC} 
\end{equation}
Policy \eqref{eq:TCTPC} does precisely what we stated, i.e., communication is triggered only when both the time since the last transmission $\tau(t)$ and the CM $h(t)$ are above given thresholds $n$ and $\bar{h}$ respectively, which constitute design parameters. Since communication is never attempted when $\tau(t) \leq n$, we do not need to spend $P_S$ to estimate the channel for these time instants, see SA\ref{ass:estimateh}.
Under policy \eqref{eq:TCTPC}, the WCNS \eqref{eq:mainsys2} becomes

\begin{equation}
\left( \begin{array}{c}\chi(t+1) \\ \tau(t+1) \end{array}\right)= \left\{
\begin{array}{ll}
\left(\begin{array}{c}  f_S(\chi(t)) \\ 1 \end{array} \right) &\hspace{-.3cm}  \begin{array}{l}  \text{if } \tau(t) \geq n+1 \text{ and }\\
h(t) \geq \bar{h}, \text{ with}  \\\text{probability }\\ \psi (h(t)\mathcal{P}(h(t))) , \end{array} \\
 \left(\begin{array}{c}  f_U(\chi(t)) \\ \tau(t)+1 \end{array} \right)  &\hspace{-.3cm}  \text{ otherwise.}
\end{array}\right.
\label{eq:mainsys3b}
\end{equation} 
Recall that the probability of successful communication, when the thresholds are satisfied at time and transmissions are attempted, is given by $\psi(h(t)P(t))$ from item (i) in SA\ref{ass:CSITpd}. 

\begin{remark}
In a more general setting, one could design $P(\cdot)$ as a function of $\tau(\cdot)$ as is done in \cite{varma2019tac} for linear systems and also as a function of $h(\cdot)$. However, the objective of this work is to focus on threshold policies as described in \eqref{eq:TCTPC}, which are easier to design and implement, and have proved their strengths/relevance in the context of estimation and wireless communication \cite{goodman2000power, varma2013energy}. \hfill $\Box$
\end{remark}

\color{black}
\subsection{Objectives}

The first objective of this work is to preserve the stability of the WNCS. Due to the stochastic nature of communication success, we can no longer ensure the original UGAS property guaranteed by SA\ref{ass:alpha1}. Instead, we rely on the stochastic notion of stability defined next, which is inspired from \cite{quevedo2014stochastic}.

\begin{definition}
We say that the set $\{(\chi, \tau): \chi=0 \}$ is \emph{stochastically stable} for system \eqref{eq:mainsys3b}, if there exists $\alpha \in \mathcal{K}_{\infty}$, such that for any solution $(\chi,\tau)$ to \eqref{eq:mainsys3b}, 
\begin{equation}
   \sum_{t=0}^{\infty} \Ex[\alpha(|\chi(t)|)] < \infty. \label{eq:ststab}
\end{equation}
\hfill $\Box$
\label{def:ss} 
\end{definition}

Definition \ref{def:ss} implies that we are merely interested in the stability of the origin for $\chi$, and not $\tau$, which is simply constructed to count the time since the last transmission. In addition to the partial stability property described above, we also want to ensure that the Lyapunov function $V$ in SA\ref{ass:alpha1} converges in expectation, with a certain given rate $\mu \in (a_S,\min\{1,a_U\})$, along solutions to \eqref{eq:mainsys3b}, i.e.,
\begin{equation}
  \Ex[V(\chi(t))] \leq \mu^t V(\chi(0)) \label{eq:stmu}
\end{equation}
for any solution $(\chi,\tau)$ to \eqref{eq:mainsys3b} for all $t \in \Zp$. Property \eqref{eq:stmu} serves as a measure of the control performance of system \eqref{eq:mainsys3b} and satisfying it automatically ensures \eqref{eq:ststab} as $\mu<1$ in view of \eqref{eq:ass-sandwich-bounds}. Note that we always pick $\mu<a_U$ as otherwise, never communicating would achieve the objective in \eqref{eq:stmu}. 

An intuitive way to ensure the two above properties is to set $P(t)=P_{\max}$ for all $t\geq 0$ by taking $n=0$ and $p=P_{\max}$. This would result in frequent successful communications in view of item (i) of SA\ref{ass:CSITpd}, but also, and importantly, in a high power consumption \cite{wu2016green}. To overcome this potential issue, we want to reduce the average power consumed while satisfying the convergence property \eqref{eq:stmu} (and thereby ensuring \eqref{eq:ststab}). The average communication power over an infinite horizon is defined as
\begin{equation}
 J(\mathbf{P}) := \lim_{T \to \infty}\frac{1}{T} \mathbb{E}\left[ \sum_{t=1}^{T} P(t) + P_s q(t)\right], \label{eq:costperk}
\end{equation}
where $\mathbf{P}=(P(1),P(2),\dots)$ is the sequence of transmission powers applied at instances dictated by the threshold policy. Our objective is to find the optimal $\bar{h}$ and $n$ for the two types of power control policies detailed in Section \ref{sec:stab}, taking into account \eqref{eq:stmu} and \eqref{eq:costperk}. \textcolor{black}{Note that reducing communications may result in a deterioration in control performance. Our approach in handling this trade-off is to reduce the communication cost as much as possible, while ensuring a certain level of control performance determined by $\mu$. This parameter $\mu$ is tunable and can be selected to fit the demands of the intended application as is illustrated later in Section \ref{sec:tradeoff}. }

%\begin{remark}
%Minimizing $J(\mathbf{P})$ over all possible $\mathbf{P}$ is hard to solve due to the nonlinear inter-dependence of $P(t)$ and the stability constraints, which is why we focus on threshold policies in this paper.It is noteworthy that such policies are also appealing from an implementation point of view. \hfill $\Box$
%\end{remark}

\section{Stochastic stability and control performance}
\label{sec:stab}

In this section, we first provide conditions on $n$, the time threshold used in \eqref{eq:TCTPC}, and the probability of successful communication to ensure the stability property \eqref{eq:stmu}. Afterwards, we clarify how this probability of successful communication depends on $\mathcal{P}$ and $\bar{h}$ for constant power and channel inversion policies.

For our analysis, it is important to note that $h(t)$ is assumed to be i.i.d. in view of item (iii) in SA\ref{ass:CSITpd}, and the power control function $\mathcal{P}(h(t))$ only depends on this variable. Therefore, given a channel threshold $\bar{h}$ and power control function $\mathcal{P}(h(t))$, the probability of successful communication when $\tau(t) \geq n+1$ is fixed over all channel realizations.  We use $\eta$ to denote this probability, where, given $\bar{h}$ and $\mathcal{P}(\cdot)$, 
\begin{equation}
    \eta: = \sum_{h \in \mathcal{H}, h \geq \bar{h}} \psi(\mathcal{P}(h) ) \rho(h) .
\end{equation}  

\subsection{Stability conditions for a given $\eta$}

Given a convergence rate $\mu \in (a_S,\min\{1,a_U\})$ for the expected value of $V$ as in \eqref{eq:stmu}, and $n \in \Zo$, we first identify a set of probabilities $\eta$'s ensuring \eqref{eq:stmu} and stochastic stability as in Definition~\ref{def:ss}. %We will then minimize \eqref{eq:costperk} over this set in Section \ref{sec:comm}. 

For that purpose, we define for any $\eta \in [0,1]$ and $n \in \Zo$, the convergence rate function
\begin{equation}
   \beta(n,\eta) :=\exp\left( \frac{\eta\log(a_S a_U^n) + \log(a_U) (1-\eta) }{ 1+ n\eta}\right), \label{eq:muofp}
\end{equation}
where $a_S$ and $a_U$ come from SA\ref{ass:alpha1}. We next provide conditions on $n$ and $\eta$ to ensure the desired stability property \eqref{eq:stmu}.
%This theorem is one of the main results of this paper, which provides conditions for stochastic stability.

\begin{proposition}
Consider $\mu \in (a_S,\min\{1,a_U\})$, $n \in \Zo$ and $\eta \in [0,1]$, if $\beta(n,\eta) \leq \mu$, then the WNCS \eqref{eq:mainsys3b} is stochastically stable and
\begin{equation}
    \Ex[V(\chi(t))] \leq \beta(n,\eta)^t V(\chi(0)) \leq \mu^t V(\chi(0))
\end{equation}
for any solution $(\chi,\tau)$ to \eqref{eq:mainsys3b} at any $t \in \Zo$ . \hfill \label{prop:etastab} $\Box$
\end{proposition}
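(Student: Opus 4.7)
The plan is to combine the pathwise Lyapunov decrease from SA\ref{ass:alpha1} with the renewal structure of successful transmissions induced by \eqref{eq:TCTPC} to bound $\Ex[V(\chi(t))]$ at geometric rate $\beta(n,\eta)$ per step, and then obtain stochastic stability as an immediate corollary of summability. Iterating SA\ref{ass:alpha1} along solutions of \eqref{eq:mainsys3b} gives a pathwise bound of the form $V(\chi(t)) \leq V(\chi(0))\prod_{s=0}^{t-1} a(s)$ with $a(s) \in \{a_S, a_U\}$ according to whether $s$ is a successful transmission instant. By item (iii) of SA\ref{ass:CSITpd} and the memoryless reset of $\tau$, the inter-success gaps $N_k := t_{k+1}-t_k$ are i.i.d.\ of the form $n+M_k$ with $M_k$ geometric of parameter $\eta$, so $\{a(s)\}$ becomes a Markov-modulated sequence driven by the clock $\tau$.

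To turn this pathwise decrease into the expectation bound, I would work with an augmented stochastic Lyapunov function $W(\chi,\tau) := V(\chi)\,g(\tau)$ on the joint state and enforce a one-step supermartingale inequality $\Ex[W(\chi(t+1),\tau(t+1)) \mid \chi(t),\tau(t)] \leq \beta(n,\eta)\, W(\chi(t),\tau(t))$. The weighting $g : \Zp \to \Rlp$, normalised so that $g(1)=1$ and $g \geq 1$ everywhere, is to be designed by splitting on the two regimes $\tau(t) \leq n$ (no attempt; $V$ multiplied by at most $a_U$, $\tau$ incremented) and $\tau(t) \geq n+1$ (attempt; $V$ multiplied by $a_S$ w.p.\ $\eta$ with $\tau$ reset to $1$, otherwise by $a_U$ with $\tau$ incremented). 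Matching the two resulting inequalities, and exploiting the identity $\log\beta(n,\eta) = \Ex[\log(a_S a_U^{N_k-1})]/\Ex[N_k]$ encoded in \eqref{eq:muofp}, pins down an admissible $g$ and yields precisely the rate $\beta(n,\eta)$. Iterating from $\tau(0)=1$ then gives $\Ex[V(\chi(t))] \leq \Ex[W(\chi(t),\tau(t))] \leq \beta(n,\eta)^t V(\chi(0))$, and the hypothesis $\beta(n,\eta) \leq \mu$ supplies the second inequality of the proposition.

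Stochastic stability is then immediate: \eqref{eq:ass-sandwich-bounds} gives $\Ex[\underline\alpha(|\chi(t)|)] \leq \Ex[V(\chi(t))] \leq \mu^t V(\chi(0))$, so $\sum_{t\geq 0}\Ex[\underline\alpha(|\chi(t)|)] \leq V(\chi(0))/(1-\mu) < \infty$ with $\alpha = \underline\alpha \in \mathcal{K}_\infty$, fulfilling Definition~\ref{def:ss}. The main obstacle lies in the construction of $g$ that realises the tight rate $\beta(n,\eta)$ of \eqref{eq:muofp}, since this is the stationary \emph{geometric} mean of the growth factors and is therefore strictly sharper than the naive arithmetic one-step bound $\eta a_S + (1-\eta)a_U$ that a constant $g$ would deliver. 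The weighting must grow with $\tau$ on $\{1,\ldots,n+1\}$ in just the right way to convert the cycle-averaged growth $a_S a_U^{N_k-1}$ into a valid per-step contraction, and checking that the $\beta$ emerging from this balance coincides with the closed-form expression \eqref{eq:muofp} is the technical heart of the argument.
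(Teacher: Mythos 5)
Your high-level architecture (pathwise product bound from SA\ref{ass:alpha1}, renewal structure of successes driven by the clock $\tau$, then summability via \eqref{eq:ass-sandwich-bounds} for stochastic stability) matches the paper's, and your last step to Definition~\ref{def:ss} is exactly the paper's. The route you choose for the central estimate, however, is genuinely different and cannot be completed. The paper does not establish a one-step contraction for $\Ex[V]$; it bounds $\Ex[\log V(\chi(t))]$ using the stationary distribution of $\tau$ (Lemma~\ref{lem:tau}) and only exponentiates at the very end of Appendix~\ref{app:p1}. Your plan instead asks for a weight $g$ with $g(1)=1$, $g\geq 1$, such that $\Ex[V(\chi(t+1))g(\tau(t+1))\mid \chi(t),\tau(t)]\leq \beta(n,\eta)\,V(\chi(t))g(\tau(t))$. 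No such $g$ exists in general, because $\beta(n,\eta)$ in \eqref{eq:muofp} is a \emph{geometric-mean} (Lyapunov-exponent, per-path) rate, while a supermartingale certificate controls the growth of the expectation, which is an arithmetic-mean quantity. Concretely, take $n=0$: every $\tau\geq 1$ is then in the attempt regime, and the required inequality at $\tau=1$, using $g(2)\geq 1=g(1)$, reads $\eta a_S + (1-\eta)a_U \leq a_S^{\eta}a_U^{1-\eta}=\beta(0,\eta)$, which contradicts the weighted AM--GM inequality whenever $a_S\neq a_U$. Equivalently, for a system saturating the bounds \eqref{eq:a1ass1}--\eqref{eq:a0ass1} with $n=0$ and i.i.d.\ success events one computes $\Ex[V(\chi(t))]=(\eta a_S+(1-\eta)a_U)^t V(\chi(0))>\beta(0,\eta)^t V(\chi(0))$, so no argument using only SA\ref{ass:alpha1} can certify the rate $\beta$ for $\Ex[V]$ by a one-step contraction. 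A secondary symptom of the same problem: on the no-attempt segment $\tau\leq n$ your inequality forces $g(\tau+1)\leq(\beta/a_U)g(\tau)$ with $\beta/a_U<1$, so $g$ must \emph{decrease} there, contrary both to your normalization $g\geq 1$ and to your remark that it ``must grow''.

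The obstacle you flag as ``the technical heart'' is therefore not a technicality but a genuine obstruction: the discrepancy between $\exp(\Ex[\log V])$ and $\Ex[V]$ is a Jensen gap, and it is precisely the gap the paper's own proof crosses when it exponentiates the bound on $\Ex[\log V(\chi(t))]$ in the final step of Appendix~\ref{app:p1}. If you want an airtight one-step argument of the type you describe, the rate it naturally certifies is the arithmetic (transfer-operator) one, e.g.\ $\eta a_S+(1-\eta)a_U$ for $n=0$, which is strictly larger than $\beta(n,\eta)$; otherwise you must follow the paper's route and keep the intermediate statement at the level of $\Ex[\log V]$ (or of $\Ex[V^{\theta}]$ for small $\theta>0$), making explicit where, and under what additional justification, the passage back to $\Ex[V]$ is performed.
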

%Theorem \ref{th:sstab} gives a condition on $n$ and $p$ so that a $V$-convergence rate $\mu$ as defined in \eqref{eq:stmu} and stochastic stability are ensured.
\begin{proof}
See Appendix \ref{app:p1}
\end{proof}
Proposition \ref{prop:etastab} implies that, as long as the chosen $n$ and the resulting $\eta$ are such that $\beta(n,\eta) \leq \mu$,  with $\beta(n,\eta)$ defined in \eqref{eq:muofp}, the desired stability and convergence properties are ensured. 

Proposition \ref{prop:etastab} cannot be directly exploited to design transmission power policies as it involves $\eta$, which depends on the channel threshold $\bar{h}$ as well as on the power control function $\mathcal{P}$. \textcolor{black}{When additional properties on $F$ and $V$ in SA\ref{ass:alpha1} are known, less conservative bounds on $\beta$ which characterizes the growth of $V$ may be derived.}.  We explain how to obtain $\eta$ for the considered transmission policies in the following.

\subsection{Evaluation of $\eta$ for the considered power policies}

As already mentioned, we focus on two types of power control policies:  i) constant power and ii) channel inversion policies. 

\subsubsection{Constant power policy} This policy implies that, whenever the conditions for transmission are satisfied according to \eqref{eq:TCTPC}, communication is attempted with a constant power $p \in [0,P_{\max}]$, i.e., $ \mathcal{P}(h)=p$ for any $h \geq \bar{h}$ with $ h \in \mathcal{H}$. The value of this constant power is a design parameter. 

\begin{lemma}
Given $\bar{h} \in \mathcal{H}$ and $p \in [0,P_{\max}]$, the probability of successful transmission at any time $t$ for which $\tau(t) \geq n+1$ under the constant power policy using power $p \in [0, P_{\max}]$ is given by
\begin{equation}
   \eta_C(\bar{h},p):= \sum_{h \in \mathcal{H}, h \geq \bar{h}} \psi(ph) \rho(h) .\label{eq:etaCP} 
\end{equation} \label{lem:etaCP} \hfill $\Box$
\end{lemma}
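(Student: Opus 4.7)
The plan is a direct application of the law of total probability, leveraging that the channel measurement $h(t)$ is i.i.d.\ with known distribution $\rho$ under item (iii) of SA\ref{ass:CSITpd}.

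First I would fix $t \in \Zp$ with $\tau(t)\geq n+1$ and partition the event of successful transmission at $t$ according to the realization of $h(t) \in \mathcal{H}$. By the threshold rule \eqref{eq:TCTPC}, for any realization with $h(t) < \bar{h}$, the transmitter sets $P(t)=0$, so no transmission is attempted and the success probability conditioned on $h(t)=h$ with $h<\bar{h}$ is zero (using $\psi(0)=0$ from item (i-d) of SA\ref{ass:CSITpd} if one prefers to phrase it that way). For realizations with $h(t)=h\geq \bar{h}$, the constant power policy prescribes $\mathcal{P}(h)=p$, so by item (i) of SA\ref{ass:CSITpd} the conditional probability of successful reception given $h(t)=h$ is $\psi(ph)$.

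Next I would combine these two cases via the law of total probability, using $\Pr(h(t)=h)=\rho(h)$ from item (iii) of SA\ref{ass:CSITpd}. Since the event that $\tau(t)\geq n+1$ depends only on past randomness (past channel realizations and past Bernoulli reception outcomes), while $h(t)$ is independent of the past under the i.i.d.\ assumption, conditioning on $\tau(t)\geq n+1$ does not alter the marginal law of $h(t)$. This yields
\begin{equation}
\Pr(\text{success at } t \mid \tau(t)\geq n+1) = \sum_{h \in \mathcal{H}} \Pr(\text{success} \mid h(t)=h)\,\rho(h),
\end{equation}
and the two-case split above collapses the sum to $h \geq \bar{h}$, giving exactly $\eta_C(\bar{h},p)$ as defined in \eqref{eq:etaCP}.

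There is no real obstacle here; the only point requiring care is the independence argument used to discard the conditioning on $\{\tau(t)\geq n+1\}$, which follows because $\tau(t)$ is a deterministic function of the past channel measurements and past reception outcomes, all independent of the current $h(t)$ by item (iii) of SA\ref{ass:CSITpd}. Everything else reduces to a one-line computation using the definitions of the policy \eqref{eq:TCTPC} and of $\psi$.
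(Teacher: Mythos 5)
Your proposal is correct and follows essentially the same route as the paper's proof: condition on the realization of $h(t)$, note that success occurs with probability $\psi(ph)$ when $h\geq\bar h$ and probability zero otherwise, and average over the i.i.d.\ distribution $\rho$. The only difference is that you make explicit the (valid) independence argument showing that conditioning on $\{\tau(t)\geq n+1\}$ does not alter the law of $h(t)$, a point the paper leaves implicit.
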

\begin{proof}
Recall that item (i) of SA\ref{ass:CSITpd} gives the probability of success for a given CM $h(t)$ to be $\psi(p h(t))$ while using power $p$. Due to the channel threshold we impose, transmissions occur only when $h(t) \geq \bar{h}$. Since $\rho(h)= \Pr(h(t)=h)$, which is an i.i.d. random variable, the expected packet success rate is given by \eqref{eq:etaCP}.
\end{proof} 

\subsubsection{Channel inversion policy} The second policy we consider is described by
\begin{equation}
    \mathcal{P}(h):= \min\left\{ P_{\max}, \displaystyle{ \frac{\kappa}{h} } \right\} ,\label{eq:chinversion}
\end{equation}
where $\kappa>0$ is the power gain. When the threshold conditions in \eqref{eq:TCTPC} are met, this policy applies a transmission power which is inversely proportional to the channel gain $h(t)$ if feasible, i.e., when $ \frac{\kappa}{h} \leq  P_{\max}$, and $P_{\max}$ otherwise. Channel inversion is often used in wireless communication to maintain a certain SINR at the receiver, and sometimes to optimize the transmission power \cite{goodman2000power, shen2003optimal}. Figure 2 illustrates the power used for a given CM after applying \eqref{eq:chinversion} and \eqref{eq:TCTPC} with $\kappa=2$, $\bar{h}=1$ and $P_{\max}=1$. 
\begin{figure}
    \centering
    \hspace{-3mm}\includegraphics[width=0.53\textwidth,trim={0 9cm 0 9cm},clip]{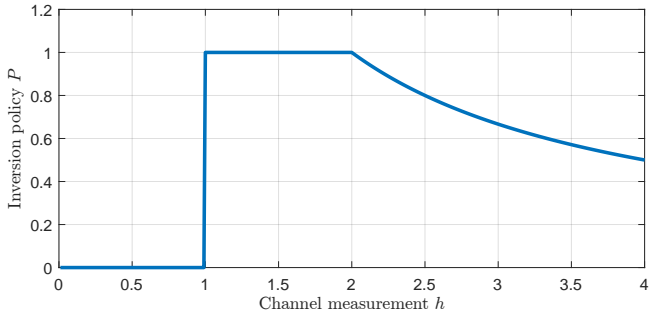}
    \caption{Transmission power $P$ for a given CM $h$ using \eqref{eq:TCTPC} and \eqref{eq:chinversion} with $\bar{h}=1,P_{\max}=1$ and $\kappa=2$.}
    \label{fig:my_label}
\end{figure}

First, we provide the expression for the probability of successful communication when $\tau(t)\geq n+1$, denoted by $\eta_I$, in the next lemma.
\begin{lemma}
Given $ \bar{h} \in \mathcal{H}$ and $ \kappa \in \Rlo$, the probability of successful transmission at any time $t$ for which $\tau(t) \geq n+1$, under the channel inversion policy \eqref{eq:chinversion} is given by
      \begin{equation} \begin{array}{ll}
        \eta_I(\bar{h},\kappa) =   & \left\{   \begin{array}{l}
             \Pr(h(t) \geq \bar{h}) \psi(\kappa)
             \text{ when } \bar{h} \geq \frac{\kappa}{P_{\max}},\\
             \Pr(h(t) > \frac{\kappa}{P_{\max}}) \psi(\kappa)
           \\  + \sum_{h \in \mathcal{H}, \bar{h} \leq h \leq \frac{\kappa}{ P_{\max}} } \psi(h P_{\max})\rho(h)\\
           \text{otherwise.}
        \end{array} \right. \label{eq:psr2} 
      \end{array}
\end{equation} \label{lem:IP} \hfill $\Box$
\end{lemma}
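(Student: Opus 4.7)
The plan is to follow the same pattern as the proof of Lemma~\ref{lem:etaCP}, simply replacing the constant power $p$ by the saturated inversion law $\mathcal{P}(h)$ of \eqref{eq:chinversion} and splitting the analysis into two cases according to whether $\bar{h}$ lies above or below the saturation threshold $\kappa/P_{\max}$.

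First I would compute the conditional success probability. Given $h(t)=h$ with $h\geq \bar{h}$, item (i) of SA\ref{ass:CSITpd} states the success probability equals $\psi(\mathcal{P}(h)\,h)$. Substituting \eqref{eq:chinversion}: when $h\geq \kappa/P_{\max}$ (non-saturated regime) one has $\mathcal{P}(h)h=\kappa$, giving conditional success probability $\psi(\kappa)$; when $h<\kappa/P_{\max}$ (saturated regime) one has $\mathcal{P}(h)h=hP_{\max}$, giving conditional success probability $\psi(hP_{\max})$. Since $h(t)$ is i.i.d.\ with distribution $\rho$ by item (iii) of SA\ref{ass:CSITpd}, the unconditional probability of successful transmission, given $\tau(t)\geq n+1$, is obtained by summing these conditional probabilities weighted by $\rho(h)$ over the admissible set $\{h\in\mathcal{H}:h\geq \bar{h}\}$.

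The two cases of \eqref{eq:psr2} then follow immediately. In Case~1, $\bar{h}\geq \kappa/P_{\max}$, the admissible set lies entirely in the non-saturated regime, so the sum collapses to $\psi(\kappa)\sum_{h\in\mathcal{H},\,h\geq\bar{h}}\rho(h)=\psi(\kappa)\Pr(h(t)\geq\bar{h})$, matching the first branch. In Case~2, $\bar{h}<\kappa/P_{\max}$, I partition the admissible set into $\{\bar{h}\leq h\leq \kappa/P_{\max}\}$ (saturated) and $\{h>\kappa/P_{\max}\}$ (non-saturated); summing $\psi(hP_{\max})\rho(h)$ over the first piece and $\psi(\kappa)\rho(h)$ over the second piece yields the second branch.

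I do not anticipate any real obstacle: the argument is essentially a case-by-case rewriting of the law of total probability applied to the two regimes of the inversion policy. The only subtle point worth flagging is the treatment of the boundary value $h=\kappa/P_{\max}$, which the statement assigns to the saturated branch; this is consistent because $\psi(P_{\max}\cdot \kappa/P_{\max})=\psi(\kappa)$, so whichever side the boundary point is assigned to yields the same contribution, and the two formulas agree at the transition $\bar{h}=\kappa/P_{\max}$.
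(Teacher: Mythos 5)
Your proof is correct and follows essentially the same route as the paper's: identify the saturated and non-saturated regimes of the inversion law, compute the conditional success probability $\psi(\mathcal{P}(h)h)$ in each, and take the expectation over the i.i.d.\ channel distribution restricted to $h\geq\bar{h}$. Your explicit check that the boundary value $h=\kappa/P_{\max}$ contributes $\psi(\kappa)$ on either side is a welcome extra precision not spelled out in the paper.
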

\begin{proof}
Since, $h(t)$ is i.i.d. and $\Pr(h(t)=h)=\rho(h)$ for any given $t \in \Zo$, we have identical probabilities for the channel distribution and also for successful transmissions when $\tau(t) \geq n+1$.

When $\bar{h} \geq \frac{\kappa}{P_{\max}}$, $\mathcal{P}(h)$ is always given by $\frac{\kappa}{h}$ as $\frac{\kappa}{h}$ will always be smaller than $P_{\max}$. When $h(t)\geq \bar{h}$, we obtain that $\mathcal{P}= \frac{\kappa}{h(t)}$ and therefore, the probability of packet success is given by $\psi(\kappa)$. Since transmissions are only attempted when $h(t) \geq \bar{h}$, we have the probability of successful transmissions is given by the first case of \eqref{eq:psr2}.

On the other hand, if $\bar{h} > \frac{\kappa}{P_{\max}}$, we have two possible cases. The first is when $ \mathcal{P}(h) < P_{\max}$, which following the previous logic allows us evaluate the probability of successful transmission as $\psi(\kappa)$. This happens with a probability $\Pr(h(t) > \frac{\kappa}{P_{\max}})$ and gives us the first line of \eqref{eq:psr2}. Secondly, if $h(t)>\bar{h}$ but $\frac{\kappa}{h(t)} \geq P_{\max}$, the probability of successful transmission is simply given by $\psi(h(t) P_{\max})$ due to the transmission power saturation. We can evaluate the expectation of this probability over the relevant limits of $h(t)$ to derive the second line of \eqref{eq:psr2}. 
\end{proof}
\subsection{Main result}
We are ready to state the main stability result. It provides a condition on the parameters $n, \bar{h}$ and $p$ or $\kappa$ for \eqref{eq:stmu} to hold under the constant power or channel inversion policies respectively.

\begin{theorem}
For given $\mu \in (a_S,\min\{1,a_U\})$ and $n \in \Zo$, the WNCS \eqref{eq:mainsys3b} is stochastically stable and \eqref{eq:stmu} holds for
\begin{itemize}
    \item constant power policies if $\beta(n,\eta_C(\bar{h},p)) \leq \mu$, with $\eta_C$ from \eqref{eq:etaCP},
    \item channel inversion policies if $\beta(n,\eta_I(\bar{h},\kappa)) \leq \mu$, with $\eta_I$ from \eqref{eq:psr2}.\hfill $\Box$ \label{th:sstab}
\end{itemize} 
\end{theorem}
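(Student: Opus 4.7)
The plan is to observe that Theorem \ref{th:sstab} is essentially a direct corollary obtained by substituting the probability-of-success expressions from Lemma \ref{lem:etaCP} and Lemma \ref{lem:IP} into the sufficient condition established in Proposition \ref{prop:etastab}. So the task reduces to verifying that the hypotheses of Proposition \ref{prop:etastab} are met in each of the two regimes.

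First, I would recall the structure of \eqref{eq:mainsys3b}: at any time $t$ for which the clock satisfies $\tau(t)\geq n+1$ and the channel realization satisfies $h(t)\geq \bar{h}$, transmission is attempted with power $\mathcal{P}(h(t))$ and succeeds with probability $\psi(h(t)\mathcal{P}(h(t)))$; otherwise the system evolves according to $f_U$. By item (iii) of SA\ref{ass:CSITpd}, $h(t)$ is i.i.d.\ with distribution $\rho$, independent of the clock and the state, so the marginal probability of a successful transmission at any instant $t$ with $\tau(t)\geq n+1$ is a constant $\eta$ that depends only on the policy, namely
\begin{equation*}
\eta = \sum_{h\in\mathcal{H},\,h\geq \bar h}\psi(\mathcal{P}(h)h)\rho(h).
\end{equation*}
This is exactly the quantity $\eta$ entering Proposition \ref{prop:etastab}.

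Next, I would instantiate $\mathcal{P}$ for the two cases. For the constant power policy, $\mathcal{P}(h)\equiv p$, so the above sum becomes $\eta_C(\bar h,p)$ as in \eqref{eq:etaCP}, this step being precisely Lemma \ref{lem:etaCP}. For the channel inversion policy, $\mathcal{P}$ is given by \eqref{eq:chinversion} and Lemma \ref{lem:IP} shows that the sum equals $\eta_I(\bar h,\kappa)$, splitting the computation according to whether the inversion $\kappa/h$ saturates at $P_{\max}$ or not. In both cases the resulting $\eta$ is the probability of successful transmission per attempt opportunity.

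Finally, I would invoke Proposition \ref{prop:etastab} with this $\eta$: whenever $\beta(n,\eta)\leq \mu$, both stochastic stability in the sense of Definition \ref{def:ss} and the exponential convergence bound \eqref{eq:stmu} hold for the closed-loop system \eqref{eq:mainsys3b}. Substituting $\eta=\eta_C(\bar h,p)$ gives the constant power claim and $\eta=\eta_I(\bar h,\kappa)$ gives the channel inversion claim. There is no real obstacle here beyond bookkeeping; all the analytical work resides in Proposition \ref{prop:etastab} (which handles the Lyapunov drift argument via the i.i.d.\ Bernoulli nature of success events at the transmission opportunities) and in the two lemmas. The only subtlety worth double-checking is that the channel threshold $\bar h$ filters out the events $h(t)<\bar h$ without affecting the i.i.d.\ structure of successive trials, so that the geometric-style mixing of successful and unsuccessful Lyapunov decays used to derive \eqref{eq:muofp} in Proposition \ref{prop:etastab} remains valid under the combined time-and-channel threshold \eqref{eq:TCTPC}.
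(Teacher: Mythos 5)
Your proposal is correct and matches the paper's own (very short) proof: both simply identify the per-opportunity success probability $\eta$ via Lemmas \ref{lem:etaCP} and \ref{lem:IP} and then invoke Proposition \ref{prop:etastab}. Your additional remarks on why the i.i.d.\ channel and the threshold $\bar h$ preserve the constant-$\eta$ structure are sound and merely make explicit what the paper leaves implicit.
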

\begin{proof}
The proof directly follows from Proposition \ref{prop:etastab} and using Lemmas \ref{lem:etaCP} and \ref{lem:IP} to get the corresponding values of $\eta$ for constant power and channel inversion policies respectively.
\end{proof}

Theorem \ref{th:sstab} provides conditions on $n, \bar{h},p$ and $\kappa$ to ensure the desired control properties. However, these conditions cannot be exploited directly in order to minimize the average communication cost \eqref{eq:costperk} in Section \ref{sec:comm} (while ensuring the desired control properties). This is because the set of $n, \bar{h},p$ and $\kappa$, such that its elements ensure the satisfaction of the conditions in Theorem \ref{th:sstab}, are not explicitly provided. Since $n$ and $\bar{h}$ belong to discrete sets, we will characterize the set of feasible $p$'s and $\kappa$'s, referred to as \emph{feasibility sets} in the subsection below.

\subsection{Feasibility sets}
\label{sec:fsets}
In order to identify the set of $p$'s and $\kappa$'s that ensure $\beta(n,\eta)\leq \mu$ for a given  $\bar{h}$, as required in Theorem~\ref{th:sstab}, we first provide the following lemma.

\begin{lemma}
For any given $\mu \in (0,1)$ and $n \in \Zo$, $\beta(n,\cdot)$ is decreasing. Additionally, $\eta_C(\bar{h},\cdot)$ and $\eta_I(\bar{h},\cdot)$ are increasing functions.\hfill $\Box$
\label{prop:minpar}
\end{lemma}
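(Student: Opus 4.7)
The plan is to prove the three monotonicity claims independently, handling $\beta(n,\cdot)$ via a direct derivative computation and both $\eta_C$ and $\eta_I$ by reducing to term-wise monotonicity.

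For $\beta(n,\cdot)$, I would first rewrite the numerator in \eqref{eq:muofp} by expanding $\log(a_S a_U^n) = \log a_S + n \log a_U$ and collecting, yielding
\[
\beta(n,\eta) = \exp\!\left( \frac{\eta \log a_S + (1+(n-1)\eta)\log a_U}{1+n\eta} \right) =: \exp(g(\eta)).
\]
Differentiating $g$ and expanding $A'(\eta)B(\eta) - A(\eta)B'(\eta)$ with $A$ the numerator and $B = 1 + n\eta$, the terms carrying $\eta$ cancel and one obtains
\[
g'(\eta) = \frac{\log(a_S/a_U)}{(1+n\eta)^2}.
\]
Since SA\ref{ass:alpha1} forces $a_S < a_U$, the right-hand side is strictly negative, so $g$ is strictly decreasing on $[0,1]$; composing with the increasing map $\exp$ gives the claim for $\beta(n,\cdot)$. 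Note that $\mu$ plays no role in this monotonicity statement; its appearance in the lemma only signals the context in which the bound will be applied.

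For $\eta_C(\bar h,\cdot)$, I would observe that every summand $\psi(ph)\rho(h)$ in \eqref{eq:etaCP} is non-decreasing in $p$ because $h \geq \bar h \geq 0$, and $\psi$ is strictly increasing on $\Rlo$ by SA\ref{ass:CSITpd}(i-b); the non-negative combination inherits monotonicity.

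For $\eta_I(\bar h,\cdot)$, the cleanest strategy is to merge the two cases of \eqref{eq:psr2} into a single expression. Under \eqref{eq:chinversion} the effective received signal level for a realization $h \geq \bar h$ is
\[
h\,\mathcal{P}(h) = \min\!\left(hP_{\max},\,\kappa\right),
\]
which returns $\kappa$ when $h > \kappa/P_{\max}$ and $hP_{\max}$ when $h \leq \kappa/P_{\max}$, thereby recovering the two branches of \eqref{eq:psr2}. Taking expectation over the i.i.d.\ CM with distribution $\rho$,
\[
\eta_I(\bar h,\kappa) = \sum_{h \in \mathcal{H},\, h \geq \bar h} \psi\!\left(\min(hP_{\max},\kappa)\right)\rho(h).
\]
Since $\kappa \mapsto \min(hP_{\max},\kappa)$ is non-decreasing for every fixed $h$, and $\psi$ is strictly increasing, each summand, and thus the sum, is non-decreasing in $\kappa$.

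The only mildly delicate step is the algebraic cancellation in the numerator of $g'(\eta)$; once the identity $\eta\log(a_S a_U^n) + (1-\eta)\log a_U = \eta \log a_S + (1+(n-1)\eta)\log a_U$ is in hand, the rest is bookkeeping. The unification of the two cases in \eqref{eq:psr2} via $\min(hP_{\max},\kappa)$ is the other small insight needed, and it removes the need for any case-by-case argument about the moving threshold $\kappa/P_{\max}$.
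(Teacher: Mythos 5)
Your proposal is correct and follows essentially the same route as the paper: the monotonicity of $\beta(n,\cdot)$ is obtained by differentiating $\log\beta(n,\cdot)$ and observing that the quotient-rule numerator collapses to $\log a_S-\log a_U<0$, and the monotonicity of $\eta_C$ and $\eta_I$ follows from term-wise monotonicity of the summands via SA3(i-b). Your unified rewriting $h\,\mathcal{P}(h)=\min(hP_{\max},\kappa)$ for the channel inversion case is a slightly more explicit way of carrying out the step the paper dispatches with ``a similar logic applies,'' but it is the same underlying argument.
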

\begin{proof}
See Appendix \ref{app:p3}.
\end{proof} 

Given $\mu \in (a_S,1), n \in \Zo,\bar{h} \in \mathcal{H}$, we denote by $\underline{p}(\mu,n,\bar{h})$, the smallest solution to
 \begin{equation}
  \beta(n,\eta_C(\bar{h},\underline{p}(\mu,n,\bar{h})))=\mu. \label{eq:defup}
\end{equation}
In view of Lemma \ref{prop:minpar}, we have that $\beta(n,\eta_C(\bar{h},p)) \leq \mu$ for any $p \geq \underline{p}(\mu,n,\bar{h})$. If no such $\underline{p}(\mu,n,\bar{h})$ exists or if $ \underline{p}(\mu,n,\bar{h}) > P_{\max}$, then $  \beta(n,\eta_C(\bar{h},p))> \mu$ for all $p \in [0,P_{\max}]$ and the conditions of Theorem \ref{th:sstab} cannot hold. Otherwise, the set of feasible powers is identified as $[\underline{p}(\mu,n,\bar{h}), P_{\max} ]$

Similarly, denote by $\underline{\kappa}(\mu,n,\bar{h})$, the smallest solution to
 \begin{equation}
  \beta(n,\eta_I(\bar{h},\underline{\kappa}(\mu,n,\bar{h})))=\mu.\label{eq:defuk}
\end{equation}
In view of Lemma \ref{prop:minpar}, we have that $\beta(n,\eta_I(\bar{\kappa},p)) \leq \mu$ for any $\kappa \geq \underline{\kappa}(\mu,n,\bar{h})$. If no such $\underline \kappa$ exists, then $  \beta(n,\eta_I(\bar{h},\kappa))> \mu$ for all $\kappa \in \Rlo$. Otherwise, the set of feasible $\kappa$'s is given by $[\underline{\kappa}(\mu,n,\bar{h}), \infty) $. Based on these observations, we are ready to proceed with the optimization of the transmission policies.

\section{Policy design}
\label{sec:comm}

In this section, we first provide the general expression of the communication cost under the constant power policy and the channel inversion policy. Minimizing this general cost in general is observed to be challenging due to the non-convex property of the cost function, and due to the average packet success $\eta$ being hard to evaluate. Indeed, while Lemmas \ref{lem:etaCP} and \ref{lem:IP} provide theoretical methods to evaluate the probability of successful communication when $\tau(t) \geq n+1$, in practice, the summation in \eqref{eq:etaCP} or \eqref{eq:psr2} is difficult to analyze. For example, consider that we use quadrature phase shift keying (QPSK) modulation, and an additive white Gaussian noise (AWGN) is present yielding $\psi(hp)=1-(1-0.5 \text{erfc}(\sqrt{hp}))^M$ where $M$ is the packet size. The summation of such an expression over $h \in \mathcal{H}$ becomes hard to characterize in terms of derivatives, convexity, etc. On the other hand, the first case of \eqref{eq:psr2} can be easily expressed analytically for several types of channel fading models that are often considered in wireless literature. For example, we have $Pr(h(t) \geq \bar{h}) = \exp(-\frac{\bar{h}}{2\sigma^2})$ for Rayleigh fading with $\sigma^2 \in \Rlp$ a constant parameter of the distribution.

All of these reasons motivate us to consider some special, relevant cases for the selection of time and channel thresholds, which allows for an easier evaluation of $\eta$ and minimization of the communication cost \eqref{eq:costperk}, while ensuring the stability and convergence properties as stated in Section \ref{sec:stab}. 

\subsection{General cost minimization}

First, we provide the expression of the cost function \eqref{eq:costperk} under the threshold policies we are interested in. 

\begin{proposition}
Under policy \eqref{eq:TCTPC}, the average communication cost in \eqref{eq:costperk} for constant power policies is given by
\begin{equation}
  J_{\mathrm{C}}(n,\bar{h},p)  =  \frac{P_S + p  \Pr(h(t) \geq \bar{h}) }{ 1 + n \eta_C(\bar{h},p)}. \label{eq:costONOFFCSI}
\end{equation}
for any $n \in \Zo,\bar{h} \in \mathcal{H}, p \in [0,P_{\max}]$, and for channel inversion policies is given by
\begin{equation}
  J_{\mathrm{I}}(n,\bar{h},\kappa)  =  \frac{P_S+ \sum_{h \in \mathcal{H}, h \geq \bar{h}} \mathcal{P}(h) \rho(h) }{ 1 + n \eta_I(\bar{h},\kappa)}. \label{eq:costinvgen}
\end{equation}
for any $n \in \Zo,\bar{h} \in \mathcal{H}, \kappa \in \Rlo$. \hfill $\Box$\label{lem:gencost}
\end{proposition}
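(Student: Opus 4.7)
The plan is to use a renewal-reward argument, exploiting the fact that the successful transmission instants $t_k$ are regeneration points of the joint process $(\chi,\tau,h)$: immediately after a success the clock $\tau$ resets to $1$, and because $h(t)$ is i.i.d.\ by item (iii) of SA\ref{ass:CSITpd} and the threshold rule \eqref{eq:TCTPC} depends only on $\tau(t)$ and $h(t)$, the inter-success times $T_k := t_{k+1}-t_k$ together with the costs accrued on each cycle form i.i.d.\ copies. The limit defining $J(\mathbf P)$ then equals $\mathbb E[\text{cycle cost}]/\mathbb E[\text{cycle length}]$ by the renewal reward theorem, which applies once we check that the cycle length and cost have finite means (true whenever $\eta>0$, which is ensured by the feasibility analysis in Section~\ref{sec:fsets}).

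First I would compute $\mathbb E[T_k]$. Starting from a success at $t_k$, one has $\tau(t_k+j)=j$ until the next success, so for $1\le j\le n$ no channel is estimated and no power is spent, while for $j\ge n+1$ the transmitter attempts communication with per-step success probability $\eta$ (by construction of $\eta_C$ or $\eta_I$). Hence the number $M_k$ of attempt steps is geometric with parameter $\eta$, giving $T_k=n+M_k$ and $\mathbb E[T_k]=n+1/\eta$.

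Next I would compute the expected cost per cycle. The $n$ waiting steps contribute nothing. At each attempt step $i$, the instantaneous cost is $Y_i=P_S+\mathcal P(h_i)\mathbf 1\{h_i\ge\bar h\}$, and the $(Y_i,h_i)$ are i.i.d.\ because $h(t)$ is i.i.d. Since $M_k$ is a stopping time for the filtration generated by the $h_i$, Wald's identity yields
\begin{equation*}
\mathbb E\Big[\sum_{i=1}^{M_k} Y_i\Big]=\mathbb E[M_k]\,\mathbb E[Y_1]=\frac{1}{\eta}\Big(P_S+\mathbb E\big[\mathcal P(h)\mathbf 1\{h\ge\bar h\}\big]\Big).
\end{equation*}
Taking the ratio to $\mathbb E[T_k]=n+1/\eta$ and clearing $\eta$ from numerator and denominator gives the general form $J=\big(P_S+\mathbb E[\mathcal P(h)\mathbf 1\{h\ge\bar h\}]\big)/(1+n\eta)$. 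Substituting $\mathcal P(h)\equiv p$ yields \eqref{eq:costONOFFCSI} with $\eta=\eta_C(\bar h,p)$, while substituting the channel inversion law \eqref{eq:chinversion} and writing the expectation as the sum over $\mathcal H$ yields \eqref{eq:costinvgen} with $\eta=\eta_I(\bar h,\kappa)$.

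The only delicate point is justifying the passage from the Cesaro-type limit in \eqref{eq:costperk} to the ratio of per-cycle expectations: the standard elementary renewal theorem gives this provided $\mathbb E[T_k]<\infty$ and the per-cycle cost is integrable, both of which are immediate here since all summands are uniformly bounded by $P_S+P_{\max}$ and $\eta>0$. Aside from that check, the argument reduces to the geometric cycle length computation and Wald's identity described above.
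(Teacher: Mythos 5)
Your proof is correct, and it reaches \eqref{eq:costONOFFCSI} and \eqref{eq:costinvgen} by a genuinely different route than the paper. The paper treats the clock $\tau(\cdot)$ as a Markov chain, invokes Lemma \ref{lem:tau} for its stationary distribution (and its exponential convergence to stationarity) to get $\Pr(\tau(t)\geq n+1)=1/(1+n\eta)$, and then multiplies by the conditional expected per-step cost $P_S+\Ex[\mathcal P(h)\mathbf 1\{h\geq\bar h\}]$. You instead exploit the regenerative structure directly: cycles between successes are i.i.d.\ with length $n+M_k$, $M_k$ geometric of mean $1/\eta$, and Wald's identity gives the expected cycle cost $\eta^{-1}(P_S+\Ex[\mathcal P(h)\mathbf 1\{h\geq\bar h\}])$, so the renewal-reward theorem yields the same ratio. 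Your route buys two things: it handles the passage from the Ces\`aro limit in \eqref{eq:costperk} to a stationary quantity cleanly via the elementary renewal theorem rather than relying on convergence to the stationary distribution, and the use of Wald's identity makes explicit why no bias is introduced by the fact that the terminal (successful) step has a channel distribution tilted away from $\rho$ --- a point the paper's one-line computation glosses over. It is also reassuring that your independent computation confirms $\Pr(\tau(t)\geq n+1)=1/(1+n\eta)$ as used in the paper's proof (note that the statement of Lemma \ref{lem:tau} displays $({1-\eta})/({n\eta+1})$ for this quantity, which is inconsistent with the value actually used; your derivation sidesteps that issue entirely). The only caveat is the degenerate case $\eta=0$ (e.g., $p=0$), where your cycles have infinite mean length and the renewal argument does not apply even though the stated formula still gives the right Ces\`aro average; since such parameters are excluded by the stability constraint anyway, this is immaterial, but you might add one sentence acknowledging it since the proposition is stated for all $p\in[0,P_{\max}]$.
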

\begin{proof} First, we note that under policy \eqref{eq:TCTPC}, the transmission power $P(\cdot)$ can be seen as a Markov process which depends on the clock state $\tau(\cdot)$ and the CM $h(\cdot)$ and applying Lemma \ref{lem:tau}, given in the Appendix, we have
\begin{equation}
     \Pr(\tau(t) \geq n+1) = \frac{\frac{1}{\eta}}{n + \frac{1}{\eta}}
\end{equation}
When $\tau(t) \geq n$, the expected transmission power can be evaluated as $p \Pr(h(t)\geq \bar{h})$ for constant power policies and $\sum_{h \in \mathcal{H}, h \geq \bar{h}} \mathcal{P}(h) \rho(h) $ for inversion policies as the transmission power is $0$ when $h(t) < \bar{h}$. This allows us to obtain \eqref{eq:costONOFFCSI} and \eqref{eq:costinvgen}.
\end{proof}

Next, we formally state our optimization problems (OP). We have OP$_C$ for constant power policies given by
\begin{equation}\begin{array}{c}
      \text{Minimize}_{n \in \Zo,\bar{h} \in \mathcal{H}, p \in [0,P_{\max}]} J_{\mathrm{C}}(n,\bar{h},p),   \\
   \text{ subject to }  \beta(n,\eta_C(\bar{h},p)) \leq \mu.
\end{array} \label{eq:op1}
\end{equation}
Similarly, we have OP$_I$ for inversion policies given by
\begin{equation}\begin{array}{c}
      \text{Minimize}_{n \in \Zo,\bar{h} \in \mathcal{H},\kappa \in \Rlo} J_{\mathrm{I}}(n,\bar{h},\kappa) ,  \\
   \text{ subject to }  \beta(n,\eta_I(\bar{h},\kappa)) \leq \mu.
\end{array} \label{eq:op2}
\end{equation}

From SA\ref{ass:estimateh}, we have that $\mathcal{H}$ is a discrete and finite set. Therefore, if, for a given $\bar{h} \in \mathcal{H}$, $n$ and $p$ or $\kappa$ can be optimized, OP$_C$ and OP$_I$ can be solved. Next, we show that the set of feasible $n$ such that $\beta(n,\eta) \leq \mu$ is also finite for any $\eta \in [0,1]$. Therefore, if $p$ and $\kappa$ can be optimized for any given $n, \bar{h}$, an exhaustive search over all feasible $n,\bar{h} $ can be performed to solve OP$_C$ and OP$_I$ as they belong to finite sets.

\begin{lemma}
For any given $\mu \in (0,1)$ and any probability of successful transmission $\eta$ in \eqref{eq:muofp}, the set of feasible $n$ satisfying $\beta(n,\eta) \leq \mu$ is finite, and any feasible $n$ is upper-bounded by $N<\infty$, defined by
\begin{equation}
    N: = \max \Big\{n \in \Zo\, | \,(a_Sa_U^n)^{\frac{1}{n+1}} \leq \mu \Big\}. \label{eq:maxN}
\end{equation} \hfill $\Box$
\end{lemma}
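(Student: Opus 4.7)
The plan is to reduce the $\eta$-dependent feasibility condition to an $\eta$-free one by exploiting the monotonicity of $\beta(n,\cdot)$ established in Lemma \ref{prop:minpar}, and then use an asymptotic argument to show only finitely many $n$ can satisfy the resulting inequality.

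First, I would observe that, since $\beta(n,\cdot)$ is decreasing in $\eta$ by Lemma \ref{prop:minpar}, its minimum over $\eta \in [0,1]$ is attained at $\eta = 1$. Substituting $\eta = 1$ into \eqref{eq:muofp} gives
\begin{equation}
\beta(n,1) = \exp\!\left(\frac{\log(a_S a_U^n)}{n+1}\right) = (a_S a_U^n)^{\frac{1}{n+1}}. \nonumber
\end{equation}
Therefore $\beta(n,\eta) \geq (a_S a_U^n)^{1/(n+1)}$ for every $\eta \in [0,1]$, and any $n$ for which $\beta(n,\eta) \leq \mu$ must necessarily satisfy $(a_S a_U^n)^{1/(n+1)} \leq \mu$. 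In other words, the set of feasible $n$ is contained in the set appearing in \eqref{eq:maxN}.

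Second, I would show that this latter set is finite, so that $N<\infty$ is well defined and serves as the uniform upper bound. To this end, write
\begin{equation}
(a_S a_U^n)^{\frac{1}{n+1}} = a_S^{\frac{1}{n+1}} \, a_U^{\frac{n}{n+1}}, \nonumber
\end{equation}
and note that as $n \to \infty$ the first factor tends to $1$ and the second to $a_U$, so $(a_S a_U^n)^{1/(n+1)} \to a_U$. Since the standing restriction $\mu \in (a_S,\min\{1,a_U\})$ forces $\mu < a_U$, there exists $n_0 \in \Zo$ such that $(a_S a_U^n)^{1/(n+1)} > \mu$ for every $n \geq n_0$. Consequently, only finitely many $n$ can satisfy the inequality defining $N$, which proves both that $N<\infty$ and that the feasible set is finite and bounded above by $N$.

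There is no real obstacle here: the only nontrivial ingredient is the monotonicity of $\beta(n,\cdot)$, which has already been established in Lemma \ref{prop:minpar}. The rest is a direct evaluation of $\beta(n,1)$ followed by an elementary limit argument using the hypothesis $\mu<a_U$.
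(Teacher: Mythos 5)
Your proof is correct and follows essentially the same route as the paper's: both reduce the feasibility condition to the $\eta$-free inequality $(a_S a_U^n)^{1/(n+1)}\leq\mu$ via the monotonicity of $\beta(n,\cdot)$ from Lemma \ref{prop:minpar} evaluated at $\eta=1$, and then use $\mu<a_U$ to rule out all sufficiently large $n$. The only cosmetic difference is that you compute the limit $(a_S a_U^n)^{1/(n+1)}\to a_U$ directly, whereas the paper rewrites the inequality as $a_S(a_U/\mu)^n>\mu$ and invokes exponential growth; both arguments rely on the same standing restriction $\mu<a_U$, which (as you correctly flag) is needed beyond the bare hypothesis $\mu\in(0,1)$ in the lemma statement.
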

\begin{proof}
Notice that for any transmission policy and $\eta \in [0,1]$, $\beta(n,\eta)$ is decreasing in $\eta$ as seen from Lemma \ref{prop:minpar}. We have in view of \eqref{eq:muofp},
\begin{equation}
 \beta(n,1) = (a_Sa_U^n)^{\frac{1}{n+1}},
\end{equation}
for any $n \in \Zo$. \textcolor{black}{Therefore, if $(a_Sa_U^n)^{\frac{1}{n+1}} > \mu $ for some $n \in \Zo$, then $\beta(n,\eta) > \mu $ for all $\eta \in [0,1]$. This condition can be rewritten as $ a_S \left(\frac{a_U}{\mu} \right)^n > \mu  $ and since $a_U>\mu$ from SA\ref{ass:alpha1}, $\frac{a_U}{\mu}$ is greater than $1$ and increasing exponentially in $n$. We can thus define $N < \infty$ according to \eqref{eq:maxN}.}

Then for every $n >N$, we have $\beta(n,\eta) > \mu$ which means that the feasible set of $n$ ensuring $\beta(n,\eta) \leq \mu$ is a subset of $\{0,\dots,N\}$.
\end{proof}

Next, note that OP$_C$ and OP$_I$ are not typically investigated in the wireless communication literature due to the time-based trigger and the constraints added in order to satisfy the control property. Minimizing this cost with respect to $p$ or $\kappa$ is, in general, a difficult problem due to $J_C$ and $J_I$ being non-convex with respect to these variables, and $\eta$ being hard to analyze in general. Therefore, we focus on some special, relevant cases and propose methods, for which we can solve OP$_C$ and OP$_I$ as described below.
\begin{enumerate}
    \item \emph{Pure channel threshold policies}, in which, transmissions can be attempted at any time, provided the CM is greater or equal to $\bar h$. This corresponds to the case where $n=0$ and results in a cost function that is often seen in wireless communications literature \cite{wu2016green}. Although this policy requires channel measurements at all time, when $P_S$ is very small or zero due to purely sensing the channel without sending pilot signals, this policy can perform well. Also, this policy is applicable when ACK packets are unavailable. 
    \item \emph{Pure time triggered policies}, in which transmissions occur whenever $\tau(t) \geq n+1$ irrespective of the actual value of $h(t)$, i.e., $\bar{h}=0$. These policies are relevant when no CM is available at the transmitter and do not consume power for sensing. Naturally, channel thresholds or inversion cannot be applied in this case and so we focus on constant power policies with threshold only on $\tau(t)$.
        \item $\epsilon$-\emph{loss constant power policies}, in which the channel threshold and power are chosen sufficiently large such that communication is almost always successful, i.e., $\psi(\bar{h}p) \geq 1- \epsilon$. These policies are suitable when the feasible transmit power is large but not finely adjustable, leading to almost sure communication success with a sufficiently high channel gain.
    \item \emph{Unsaturated inversion policies}, in which the channel thresholds and the power gain $\kappa$ are chosen such that the channel inversion results in a transmission power smaller than or equal to the maximum power, i.e., $\bar{h} \geq \frac{\kappa}{P_{\max}}$. This allows us to use \eqref{eq:psr2} and thus $\eta$ can be easily evaluated. We also demonstrate in the following how this simplifies tuning $\kappa$.
\end{enumerate}

\subsection{Pure channel threshold policies}\label{sec:purech}

In the policies considered in this subsection, since the decision to transmit or not is determined by the CM $h(t)$ alone, the policy becomes independent of $\tau(t)$. This allows us to relax item (ii) of SA\ref{ass:CSITpd} as this policy can be implemented without any ACK protocol. From Proposition \ref{lem:gencost}, the cost function is given in this case, for any $\bar{h} \in \mathcal{H}, p \in [0,P_{\max}]$, by
\begin{equation}
  J_{\mathrm{C}}(0,\bar{h},p)  = P_S+  \sum_{h \in \mathcal{H}, h \geq \bar{h}} p \rho(h) =P_S+p \Pr(h(t) \geq \bar{h}) \label{eq:costTTCSI0n}
\end{equation}
for constant power policies. The cost function for any $\bar{h} \in \mathcal{H}, \kappa \in \Rlo$ is given by
\begin{equation}
  J_{\mathrm{I}}(0,\bar{h},\kappa)  =  P_S+ \sum_{h \in \mathcal{H}, h \geq \bar{h}} \frac{\kappa}{h} \rho(h)  \label{eq:costTTCSI0nI}
\end{equation}
for inversion policies. Due to these simplified forms of the cost function, we are able to solve OP$_C$ and OP$_I$ as follows.

\begin{proposition}
For any given $\mu \in (0,1), n=0$ and $\bar{h} \in \mathcal{H}$, if $\underline{p}(\mu,0,\bar{h}) \in [0,P_{\max}]$ satisfying \eqref{eq:defup} exists, the solution to OP$_C$ is given by $\underline{p}(\mu,0,\bar{h})$, otherwise OP$_C$ is infeasible. Similarly, if $\underline{\kappa}(\mu,0,\bar{h}) \in \Rlo$ satisfying \eqref{eq:defuk} exists, the optimal $\kappa$ solving OP$_I$ is $\underline{\kappa}(\mu,0,\bar{h})$, otherwise OP$_I$ is infeasible. \label{prop:purech}
\end{proposition}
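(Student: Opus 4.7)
The plan is to exploit the monotonicity of both the cost function and the stability constraint when $n=0$ is fixed, so that minimizing cost reduces to picking the smallest feasible value of $p$ (or $\kappa$).

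First, for the constant power case, I would observe from \eqref{eq:costTTCSI0n} that, for a fixed threshold $\bar h \in \mathcal{H}$, the cost
\[
J_C(0,\bar h,p) = P_S + p \,\Pr(h(t)\geq \bar h)
\]
is affine and strictly increasing in $p$ on $[0,P_{\max}]$ (assuming the non-degenerate case $\Pr(h(t)\geq \bar h)>0$; otherwise transmissions never occur and feasibility fails trivially unless the constraint is vacuously met). Thus the minimization over $p$ of OP$_C$ reduces to selecting the smallest feasible $p$.

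Second, I would characterize this smallest feasible $p$. By Lemma \ref{prop:minpar}, the map $\eta\mapsto\beta(0,\eta)$ is decreasing in $\eta$, and $p\mapsto \eta_C(\bar h,p)$ is increasing in $p$. Composing, $p \mapsto \beta(0,\eta_C(\bar h,p))$ is a (weakly) decreasing function of $p$. Therefore the feasibility set $\{p\in[0,P_{\max}] : \beta(0,\eta_C(\bar h,p))\leq \mu\}$ is either empty, or an interval of the form $[\underline{p}(\mu,0,\bar h),P_{\max}]$, with $\underline{p}(\mu,0,\bar h)$ the unique solution of \eqref{eq:defup} (or the left endpoint of the feasibility interval). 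Combining with the monotonicity of $J_C$ yields that the optimum is attained exactly at $\underline{p}(\mu,0,\bar h)$ when it lies in $[0,P_{\max}]$, and OP$_C$ is infeasible otherwise.

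For the channel inversion case I would argue identically. From \eqref{eq:costTTCSI0nI}, and using the definition $\mathcal{P}(h)=\min\{P_{\max},\kappa/h\}$ from \eqref{eq:chinversion}, each summand $\min\{P_{\max},\kappa/h\}\rho(h)$ is nondecreasing in $\kappa$ (linearly increasing up to the saturation point, then constant), so $\kappa \mapsto J_I(0,\bar h,\kappa)$ is nondecreasing. By Lemma \ref{prop:minpar}, $\kappa \mapsto \eta_I(\bar h,\kappa)$ is increasing, so $\kappa \mapsto \beta(0,\eta_I(\bar h,\kappa))$ is decreasing, yielding a feasibility set of the form $[\underline{\kappa}(\mu,0,\bar h),\infty)$ whenever nonempty. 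The minimum of $J_I$ over this set is therefore attained at $\underline{\kappa}(\mu,0,\bar h)$.

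There is essentially no hard step here: the whole argument is a direct combination of the explicit formulas \eqref{eq:costTTCSI0n}--\eqref{eq:costTTCSI0nI} and the monotonicity statements of Lemma \ref{prop:minpar}. The only subtlety worth a short sentence is confirming the well-definedness of $\underline{p}(\mu,0,\bar h)$ and $\underline{\kappa}(\mu,0,\bar h)$ as the \emph{smallest} feasible values via the continuity of $\beta$, $\eta_C$ and $\eta_I$ in their second argument; this, together with monotonicity, guarantees that the defining equations \eqref{eq:defup} and \eqref{eq:defuk} cut out exactly the boundary of the feasibility interval.
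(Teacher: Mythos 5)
Your proposal is correct and follows essentially the same route as the paper: the cost is (weakly) increasing in $p$ (resp.\ $\kappa$) for $n=0$, the feasibility set from Section \ref{sec:fsets} is the interval $[\underline{p}(\mu,0,\bar h),P_{\max}]$ (resp.\ $[\underline{\kappa}(\mu,0,\bar h),\infty)$), so the optimum sits at the left endpoint. Your added remarks on the monotonicity of $p\mapsto\beta(0,\eta_C(\bar h,p))$ and on the saturation in $\mathcal{P}(h)=\min\{P_{\max},\kappa/h\}$ only make explicit what the paper leaves implicit.
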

\begin{proof}
The proof is straightforward upon noticing that $J_{\mathrm{C}}(0,\bar{h},p)$ is increasing in $p$. Recall that the set of feasible $p$ ensuring stochastic stability are such that $p\geq \underline{p}(\mu,0,\bar{h}) $. Therefore, the optimal power minimizing the communication energy cost while satisfying the desired convergence property must be $\underline{p}(\mu,0,\bar{h})$ for constant power policies. Similar arguments hold for inversion polices, completing the proof. 
\end{proof}

 \subsection{Pure time threshold policies}
\label{sec:TTnocsi}

For ease of notation, we define $e(p):= 1- \eta_C(0,p)$, where $\eta_C$ comes from \eqref{eq:etaCP}, the average packet error rate over all possible channel realizations while using a fixed power $p \in [0,P_{\max}]$. We make the following assumption on $e$.
\begin{assumption}
The mapping $e:[0,P_{\max}] \to [0,1]$ is continuous, twice differentiable, initially concave and eventually convex. \label{ass:packetdr2} \hfill $\Box$
\end{assumption}

Assumption \ref{ass:packetdr2} is standard in the wireless communications literature and is observed to hold true for various channel fading models \cite{rodriguez2003analytical, goodman2000power}. Next, from Section \ref{sec:fsets}, we know that the set of feasible powers is given by $[ \underline{p}(\mu,n,0), P_{\max}] $. For the sake of convenience, we will use $\underline{p}_n:= \underline{p}(\mu,n,0)$ throughout this subsection.

The minimum feasible power is given by $\underline p_n $, but using $\underline{p}_n$ does not necessarily imply that the cost \eqref{eq:costperk} is minimized for a given $n \in \Zo$. Indeed, it might be more efficient to use a higher power because we assume that transmissions are attempted until a packet goes through, and using a smaller power would imply a larger number of re-transmissions, thereby potentially increasing the net energy consumed \cite{varma2013energy}, \cite{goodman2000power}, see Section \ref{sec:nr} for an illustration. In the next proposition, we characterize the associated average communication cost. We use the notation $J_{\mathrm{PT}}(p,n):=J_C(n,0,p)$ to denote the cost of a pure time threshold policy with a constant power $p$ for convenience.

\begin{proposition}
Under Assumption \ref{ass:packetdr2}, using a transmission policy based on \eqref{eq:TCTPC} with $\bar{h}=0$, $p \in [0,P_{\max}]$ and $\mathcal{P}(h)=p$, the cost in \eqref{eq:costperk} for all $n \in \Zo$ is given by
\begin{equation}
   J_{\mathrm{PT}}(p,n)=  \frac{ p}{ (1 - e(p))n + 1}. \label{eq:TTcost1}
\end{equation}
Furthermore, the mapping $(p,n) \mapsto J_{\mathrm{PT}}(p,n)$ is
\begin{enumerate}
    \item strictly increasing in $p$ for small $n$,
     \item ``$N$-shaped" for larger values of $n$, i.e., it is initially increasing upto a local maximum, then decreasing to a local minimum and then finally increasing again in $p$. \hfill $\Box$
\end{enumerate}
%The optimal transmission power $p_{\text{OPT}}(n)$ to use in the policy \eqref{eq:TTPC} triggered minimizing this cost is given by
%\[\min \{ P_{\max} , \max \{ \underline{p}_n ,  \arg \min \{ J(\underline{p}_n), J(p^*)\} \}  \} \]
%in case 2 where $p_{n}^*$ is defined as the largest solution to
%\begin{equation}
%n e'(p^*)+\frac{1+n(1-e(p^*))}{p^*}=0    
%\label{eq:powersadle}
%\end{equation}
 \label{th:optPnoCSI} \end{proposition}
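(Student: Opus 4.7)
The plan is to first derive the closed-form cost expression, then study the sign of $\partial J_{\mathrm{PT}}/\partial p$ to extract the monotonicity. For the formula, I would specialize Proposition~\ref{lem:gencost} to the case $\bar h=0$. Because a pure time threshold policy does not require channel measurements, $q(t)=0$ for every $t$ and the sensing cost $P_S$ in \eqref{eq:costperk} is absent. Moreover, $\Pr(h(t)\geq 0)=1$ and, by definition of $e$, $\eta_C(0,p)=\sum_{h\in\mathcal H}\psi(ph)\rho(h)=1-e(p)$. Substituting into \eqref{eq:costONOFFCSI} yields \eqref{eq:TTcost1} directly.

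For the shape analysis, let $S(p):=1-e(p)$, so that $S(0)=0$, $S'(p)=-e'(p)\geq 0$ (since $\psi$ is increasing, as imposed by SA\ref{ass:CSITpd}), and a direct computation gives
\begin{equation*}
\frac{\partial J_{\mathrm{PT}}}{\partial p}(p,n)=\frac{1+n\bigl(S(p)-pS'(p)\bigr)}{\bigl(1+nS(p)\bigr)^2}.
\end{equation*}
The denominator is positive, so the sign of the derivative agrees with that of $M(p):=1+nf(p)$, where $f(p):=S(p)-pS'(p)$. A one-line calculation yields $f'(p)=-pS''(p)$; combining $S''=-e''$ with Assumption~\ref{ass:packetdr2} shows that $S''$ is initially positive and eventually negative. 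Therefore $f$ starts at $f(0)=0$, decreases on an initial interval down to a strictly negative minimum $f_{\min}<0$, and then increases. The question of monotonicity of $J_{\mathrm{PT}}$ thereby reduces to comparing the single-variable function $f$ with the constant $-1/n$.

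For small $n$ such that $n|f_{\min}|<1$, we have $M(p)\geq 1+nf_{\min}>0$ uniformly in $p\in[0,P_{\max}]$, so $J_{\mathrm{PT}}(\cdot,n)$ is strictly increasing, proving item (1). For large $n$ such that $n|f_{\min}|>1$, since $M(0)=1>0$, $M$ takes a negative value near the argmin of $f$, and then recovers as $f$ grows back, $M$ exhibits the sign pattern $(+,-,+)$ on $[0,P_{\max}]$ with exactly two zeros. Translating this into monotonicity of $J_{\mathrm{PT}}$ gives the claimed $N$-shape (increasing, then decreasing, then increasing in $p$), proving item (2).

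The main technical obstacle is ensuring that the two sign changes of $M$ both fall inside the feasible range $[0,P_{\max}]$, i.e.\ that $P_{\max}$ is large enough to include the inflection region of $e$ and the subsequent recovery of $f$; otherwise, only a truncated piece of the $N$-pattern is observed. Identifying the critical value $n^{\star}\approx 1/|f_{\min}|$ at which the behaviour transitions from monotone to $N$-shape is the crux, while the remaining implications follow immediately from Proposition~\ref{lem:gencost} and Assumption~\ref{ass:packetdr2}.
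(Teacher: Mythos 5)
Your derivation of \eqref{eq:TTcost1} matches the paper's in substance (both ultimately rest on the stationary distribution of $\tau$ from Lemma \ref{lem:tau}, with $\eta=1-e(p)$ and no sensing cost), but your shape analysis takes a genuinely different and more self-contained route. The paper works with the reciprocal $\xi_n(p)=1/J_{\mathrm{PT}}(p,n)=\tfrac{1}{p}+n\tfrac{1-e(p)}{p}$, invokes Theorem 1 of \cite{rodriguez2003analytical} to get quasi-concavity of $\tfrac{1-e(p)}{p}$ for sigmoidal $1-e$, and then runs a second-derivative test at critical points of $\xi_n$ (where $\xi_n''(p^*)=-ne''(p^*)/p^*$) to show that local minima can only occur in the concave region of $e$ and local maxima only in the convex region, forcing uniqueness of each and hence the $N$-shape of $J_{\mathrm{PT}}=1/\xi_n$. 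You instead differentiate $J_{\mathrm{PT}}$ directly, isolate the sign-determining numerator $M(p)=1+n\bigl(S(p)-pS'(p)\bigr)$, and observe that $f:=S-pS'$ satisfies $f(0)=0$ and $f'=-pS''$, so the single inflection of $e$ guaranteed by Assumption \ref{ass:packetdr2} makes $f$ unimodal (decreasing to $f_{\min}<0$, then increasing); the whole dichotomy then reduces to comparing $nf_{\min}$ with $-1$. This buys you an elementary argument with no external quasi-concavity result and, as a bonus, an explicit transition threshold $n^{\star}\approx 1/|f_{\min}|$ between the monotone and $N$-shaped regimes, which the paper's case split (``no local extremum'' versus ``at least one'') leaves implicit. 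Your flagged caveat --- that the full $(+,-,+)$ sign pattern of $M$ may only partially fall inside $[0,P_{\max}]$ --- is a real issue, but it is equally present (and equally unaddressed at the level of this proposition) in the paper, which only resolves the truncation later in Theorem \ref{th:optp} by comparing boundary points with the interior local minimum; so your proof is no weaker on this point. One minor loose end: the borderline case $n|f_{\min}|=1$, where $M$ touches zero at a single point, still gives a (non-strictly) increasing $J_{\mathrm{PT}}$ and can be folded into item (1).
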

 \begin{proof}
 See Appendix \ref{app:p4}.
 \end{proof}

Note that $P_S$ does not appear in the cost \eqref{eq:TTcost1} as the channel is never sensed or estimated for the policies considered in this subsection. We can then exploit Proposition \ref{th:optPnoCSI} to characterize the optimal power $p$ minimizing \eqref{eq:costperk} for a given $n$, such that $(p,n) \in \mathcal{S}$.

\begin{theorem}
Under Assumption \ref{ass:packetdr2}, for any given $n \in \Zo$ with $\beta(n,1-e(P_{\max})) \leq \mu$ and $\bar{h}=0$, OP$_C$ is solved by using the optimal power $p_n^*$, obtained as follows: If a local minimum $p^o_n \in \Rlp$ exists such that $ \frac{\partial J_{\mathrm{PT}}}{\partial p}(p^o_n,n) =0 $ and $ \frac{\partial^2 J_{\mathrm{PT}}}{\partial p^2}(p^o_n,n) >0 $, then $ p_n^* \in \left\{\underline{p}_n,p^o_n , P_{\max} \right\}  $. Otherwise, $ p_n^* = \underline{p}_n$.\hfill $\Box$ \label{th:optp}
\end{theorem}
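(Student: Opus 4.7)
The plan is to combine the shape information on $J_{\mathrm{PT}}(\cdot,n)$ given by Proposition \ref{th:optPnoCSI} with the characterization of the feasibility set from Section \ref{sec:fsets}, and then reduce the problem to a standard one-dimensional minimization over a compact interval. The hypothesis $\beta(n,1-e(P_{\max}))\leq\mu$ guarantees, via Lemmas \ref{prop:minpar} and \ref{lem:etaCP} (specialized to $\bar h=0$), that $\underline{p}_n\leq P_{\max}$, so the feasible set for OP$_C$ with $\bar h=0$ reduces to the compact interval $[\underline{p}_n,P_{\max}]$. Thus the task becomes $\min_{p\in[\underline{p}_n,P_{\max}]} J_{\mathrm{PT}}(p,n)$.

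First I would dispatch the case in which no local minimum of $J_{\mathrm{PT}}(\cdot,n)$ satisfying the stated first- and second-order conditions exists. By Proposition \ref{th:optPnoCSI}, the only remaining alternative for the shape of $J_{\mathrm{PT}}(\cdot,n)$ is that it is strictly increasing on $\Rlo$, so its minimum over any interval is attained at the left endpoint, yielding $p_n^*=\underline{p}_n$ as claimed.

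In the remaining case a local minimizer $p_n^o$ exists, and the function is $N$-shaped: there is an interior local maximum to the left of $p_n^o$ and strict monotonicity on each of the three regions separating the two critical points. Because $J_{\mathrm{PT}}(\cdot,n)$ is continuously differentiable, any interior minimizer on $[\underline{p}_n,P_{\max}]$ must be a critical point; the local maximum is ruled out by the second-order condition, leaving only $p_n^o$ as a candidate interior minimizer. The classical Weierstrass/boundary-versus-interior dichotomy on a compact interval then guarantees that the minimum is attained either in the interior at $p_n^o$ (when $p_n^o\in[\underline{p}_n,P_{\max}]$) or at one of the endpoints $\underline{p}_n,P_{\max}$. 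Hence $p_n^*\in\{\underline{p}_n,p_n^o,P_{\max}\}$; when $p_n^o$ lies outside the feasible interval, the comparison simply discards it, and the statement still holds.

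The only mildly delicate point is to ensure that no spurious interior optimum is missed. This is why I would rely explicitly on the $N$-shape from Proposition \ref{th:optPnoCSI}: it guarantees that $J_{\mathrm{PT}}(\cdot,n)$ has at most two critical points and that the second-order test unambiguously separates the local maximum from the local minimum. Thus the three-point candidate set in the statement is exhaustive, and OP$_C$ is solved by evaluating $J_{\mathrm{PT}}$ at these (at most) three points and selecting the one giving the smallest value.
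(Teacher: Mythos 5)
Your proposal is correct and follows essentially the same route as the paper's proof: invoke the dichotomy of Proposition \ref{th:optPnoCSI} (strictly increasing versus $N$-shaped), dispatch the increasing case by taking the left endpoint $\underline{p}_n$, and in the $N$-shaped case use compactness of $[\underline{p}_n,P_{\max}]$ together with the critical-point structure to reduce the candidates to $\{\underline{p}_n,p_n^o,P_{\max}\}$. Your version is slightly more explicit than the paper's (in justifying $\underline{p}_n\leq P_{\max}$ from the hypothesis $\beta(n,1-e(P_{\max}))\leq\mu$ and in using the second-order test to exclude the local maximum), but the underlying argument is the same.
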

\begin{proof}
Proposition \ref{th:optPnoCSI} implies $J_{\mathrm{PT}}$ for a given $n$ is either strictly increasing in $p$ or $N$-shaped. In the first case, $p_n^o$ does not exist and so, selecting $\underline{p}_n $ is optimal.

In the second case, $J_{\mathrm{PT}}$ is $N$-shaped in $p$, and it has a single local minimum and is concave for small $p$ and then convex. Since we look at $J_{\mathrm{PT}}(p,n)$ for $p \in [\underline{p}_n, P_{\max}]$, a closed and compact set, the global optimum is either the local minimum or one of the boundary points.  When $p_n^o> P_{\max}$, $J_{\mathrm{PT}}$ may be decreasing or concave in the interval $[\underline{p}_n, P_{\max}]$, which implies that the global minimum is at $\underline{p}_n$ or $P_{\max}$. Otherwise, the optimal power is either $\underline{p}_n $ or the local minimum $p_n^o$.
\end{proof}

Theorem \ref{th:optp} characterizes the optimal power to use for a given $n \in \Zo$. If a local minimum $p^o_n$ for $J_{\mathrm{PT}}$ exists, the optimal power belongs to $\left\{\underline{p}_n,p^o_n , P_{\max} \right\}$. Otherwise, the optimal power is the minimum feasible power $\underline{p}_n$. In practice, the existence of the local minimum $p^o_n \in [\underline{p}_n,P_{\max}]$ for $J_{\mathrm{PT}}$ with a given $n$ can be easily checked by applying a gradient descent initialized at $P_{\max}$. If the gradient descent converges to a point in the interval $[\underline{p}_n,P_{\max}]$, then this point is $p^o_n$, and all elements of the set $\left\{\underline{p}_n,p^o_n , P_{\max} \right\}$ can be tested to find the optimum.

\subsection{$\epsilon$-loss constant power policies}

We focus on policies with $\bar{h},p$ selected such that $\psi(\bar{h}p) \geq 1 - \epsilon$ with $0<\epsilon \ll 1$, i.e., when communicating, the packet is successful with a probability close to $1$. We define $p_\epsilon(\bar{h})$ as the solution to
\begin{equation}
 \psi(\bar{h} p_\epsilon(\bar{h})) = 1 - \epsilon.
\end{equation}

If $p_\epsilon(\bar{h}) \in [0,P_{\max}]$ exists, then $\bar{h}$ is a feasible channel threshold for the $\epsilon$-sure constant power policy and all $p \in [p_\epsilon(\bar{h}),P_{\max}] $ are feasible. We make use of the following result to provide optimality conditions on the communication cost.

\begin{proposition}
For any $\bar{h}$, $p$, we have
\begin{equation}
 \Pr(h(t) \geq \bar{h}) \psi(\bar{h}p) \leq   \eta_C(\bar{h},p) \leq \Pr(h(t) \geq \bar{h}).
\end{equation} \label{prop:etacbounds} \hfill $\Box$
\end{proposition}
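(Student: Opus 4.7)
The plan is to prove each of the two inequalities by exploiting the monotonicity and boundedness of $\psi$ together with the definition of $\eta_C$ given in \eqref{eq:etaCP}. The whole argument reduces to sandwiching the summand $\psi(ph)\rho(h)$ between two simpler expressions that, once summed over $\{h\in\mathcal{H}:h\geq\bar{h}\}$, yield the stated bounds involving $\Pr(h(t)\geq\bar{h})=\sum_{h\in\mathcal{H},\,h\geq\bar{h}}\rho(h)$.

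For the upper bound, I would invoke item (i) of SA\ref{ass:CSITpd}, which gives $\psi:\Rlo\to[0,1]$, hence $\psi(ph)\leq 1$ for every admissible $h$ and $p$. Summing $\psi(ph)\rho(h)\leq\rho(h)$ over $h\in\mathcal{H}$ with $h\geq\bar{h}$ immediately yields $\eta_C(\bar{h},p)\leq\Pr(h(t)\geq\bar{h})$.

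For the lower bound, I would use the strict monotonicity of $\psi$ on $\Rlo$ (property (i-b) of SA\ref{ass:CSITpd}). Any $h\in\mathcal{H}$ with $h\geq\bar{h}$ satisfies $ph\geq p\bar{h}$, so that $\psi(ph)\geq\psi(p\bar{h})$. Pulling the constant factor $\psi(p\bar{h})$ out of the sum in \eqref{eq:etaCP} gives
\begin{equation*}
\eta_C(\bar{h},p)\;\geq\;\psi(p\bar{h})\!\!\sum_{h\in\mathcal{H},\,h\geq\bar{h}}\!\!\rho(h)\;=\;\psi(\bar{h}p)\,\Pr(h(t)\geq\bar{h}),
\end{equation*}
which is the desired lower bound (after writing $\psi(p\bar{h})=\psi(\bar{h}p)$).

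Neither direction presents a real obstacle: the proof is essentially a one-line monotonicity/boundedness estimate applied term-by-term inside the sum defining $\eta_C$. The only thing to be careful about is to cite the correct sub-items of SA\ref{ass:CSITpd}, namely (i-b) for monotonicity and the codomain $[0,1]$ of $\psi$ for the upper bound, and to note that the argument is valid for every $\bar{h}\in\mathcal{H}$ (including the boundary cases, since the inequality is preserved when the summation set is empty, in which case both sides of each inequality vanish).
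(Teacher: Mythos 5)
Your proof is correct and follows essentially the same route as the paper's: both arguments bound the summand $\psi(ph)\rho(h)$ term-by-term, using $\psi(ph)\leq 1$ for the upper bound and the monotonicity of $\psi$ (so $\psi(ph)\geq\psi(\bar{h}p)$ for $h\geq\bar{h}$) for the lower bound, then sum over $h\geq\bar{h}$. Your version is slightly more explicit about which sub-item of SA\ref{ass:CSITpd} justifies each inequality, but the substance is the same.
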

\begin{proof}
Recall that $\psi$ is a strictly \textcolor{black}{increasing} function according to SA\ref{ass:CSITpd}. Therefore, the term $\psi(ph) \rho(h)$ in the summation expression in $\eta_C$, as seen from Lemma \ref{lem:etaCP} is lower and upper bounded by $\psi(\bar{h}p) \rho(h)$ and $\rho(h)$ respectively. Note that since $\rho(h)=\Pr(h(t)=h)$, we have $\Pr(h(t) \geq \bar{h}) = \sum_{h \in \mathcal{H}, h \geq \bar{h}} \rho(h) $. 
\end{proof}

For a given $n, \bar{h}$, from Proposition \ref{prop:etacbounds}, the desired control properties \eqref{eq:stmu} are ensured for any $p \in [p_\epsilon(\bar{h}),P_{\max}] $ as long as
\begin{equation}
    \beta(n, \Pr(h(t) \geq \bar{h}) (1-\epsilon)) \geq \mu
\label{eq:ssapprox}
\end{equation}

We approximate $\eta_C(\bar{h},p)$ as follows to find the optimal policy, which is justified by $\epsilon$ being very small, and thus the difference of the approximation to the exact value becomes of order $\epsilon$.
\begin{assumption}
For all $p \in [p_\epsilon(\bar{h}),P_{\max}] $, $\eta_C(\bar{h},p) =\Pr(h(t) \geq \bar{h}) (1-\epsilon) $. \hfill $\Box$ \label{ass:almostsure}
\end{assumption}

We now search for the optimal $p$ solving OP$_C$ under Assumption \ref{ass:almostsure}.

\begin{theorem}
Under Assumption \ref{ass:almostsure}, for a given $n \in \Zo$, if there exists $\bar{h}^* \in \mathcal{H}$ such that
\begin{equation}
    \beta(n, \Pr(h(t) \geq \bar{h}^*) (1-\epsilon)) =\mu, \label{eq:minimalh}
\end{equation}
then $p_\epsilon(\bar{h})$ is the optimal power for any $\bar{h} \leq \bar{h}^*$ solving OP$_C$. If no such $\bar{h}^*$ exists, then the $\epsilon$-loss constant power policy is infeasible.  \hfill $\Box$
\end{theorem}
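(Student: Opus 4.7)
The plan is to show that, under Assumption \ref{ass:almostsure}, the problem OP$_C$ decouples: for any fixed feasible $\bar{h}$, the denominator of $J_C$ in \eqref{eq:costONOFFCSI} no longer depends on $p$, which reduces the $p$-optimization to minimizing the numerator, and then feasibility entirely determines the admissible range of $\bar h$ via a monotonicity argument based on Lemma~\ref{prop:minpar}.

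First, I would substitute Assumption \ref{ass:almostsure} into \eqref{eq:costONOFFCSI} to obtain
\begin{equation*}
J_C(n,\bar{h},p) \;=\; \frac{P_S + p\,\Pr(h(t)\geq \bar{h})}{1 + n\,\Pr(h(t)\geq \bar{h})(1-\epsilon)},
\end{equation*}
valid for every $p\in[p_\epsilon(\bar{h}),P_{\max}]$. Since the denominator is independent of $p$ and the numerator is strictly increasing and affine in $p$ (because $\Pr(h(t)\geq \bar h)\geq 0$), the minimum in $p$ over the feasible interval is attained at its left endpoint $p=p_\epsilon(\bar{h})$. This handles the optimization in $p$ for every fixed $\bar{h}$.

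Next, I would characterize the set of feasible $\bar{h}$. Under Assumption \ref{ass:almostsure}, the stability constraint in \eqref{eq:op1} simplifies to $\beta\bigl(n,\Pr(h(t)\geq \bar{h})(1-\epsilon)\bigr)\leq \mu$. By Lemma~\ref{prop:minpar}, $\beta(n,\cdot)$ is decreasing, and $\bar h\mapsto \Pr(h(t)\geq \bar h)$ is nonincreasing, so the composition $\bar h\mapsto \beta\bigl(n,\Pr(h(t)\geq \bar{h})(1-\epsilon)\bigr)$ is nondecreasing in $\bar h$. Consequently, if $\bar{h}^*\in\mathcal H$ satisfies \eqref{eq:minimalh}, the constraint holds precisely for $\bar{h}\leq \bar{h}^*$; combining this with the first step yields that $p_\epsilon(\bar{h})$ solves OP$_C$ for each such $\bar{h}$. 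If instead no $\bar{h}^*$ satisfying \eqref{eq:minimalh} exists, the monotonicity just established implies $\beta\bigl(n,\Pr(h(t)\geq \bar h)(1-\epsilon)\bigr)>\mu$ for every $\bar h\in\mathcal H$, so the stability constraint fails for every admissible channel threshold, i.e.\ OP$_C$ is infeasible in the $\epsilon$-loss class.

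The proof is essentially bookkeeping once one observes that Assumption \ref{ass:almostsure} removes the dependence of the denominator on $p$; the only mild subtlety I would take care of is verifying that $p_\epsilon(\bar{h})\leq P_{\max}$ for each $\bar{h}\leq \bar{h}^*$ (which is precisely the feasibility condition defining the $\epsilon$-loss class, ensured implicitly once \eqref{eq:minimalh} is assumed to have a solution). No nonconvex optimization is required, and the main obstacle—namely the coupling between $\bar h$ and $p$ through $\eta_C$—is precisely what the $\epsilon$-loss approximation dissolves.
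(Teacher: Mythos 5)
Your proof is correct and follows essentially the same route as the paper's: under Assumption \ref{ass:almostsure} the denominator of $J_C$ becomes independent of $p$ while the numerator is affine and increasing in $p$, so the minimizer over the feasible interval $[p_\epsilon(\bar h),P_{\max}]$ is its left endpoint $p_\epsilon(\bar h)$. You additionally spell out the feasibility bookkeeping in $\bar h$, which the paper's proof omits entirely; the only caveat is that, since $\mathcal{H}$ is finite, the nonexistence of a $\bar h^*$ achieving \emph{equality} in \eqref{eq:minimalh} does not by itself force $\beta\bigl(n,\Pr(h(t)\geq\bar h)(1-\epsilon)\bigr)>\mu$ for every $\bar h\in\mathcal{H}$ (the nondecreasing map could skip over, or stay below, $\mu$) --- but that imprecision is inherited from the theorem statement rather than introduced by your argument.
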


\begin{proof}
Note that $J_C(n,\bar{h},p)=( P_S + p  \Pr(h(t) \geq \bar{h}) (1+ n \Pr(h(t) \geq \bar{h})(1-\epsilon))^{-1}$ from \eqref{eq:costONOFFCSI} as we consider $\eta_C(\bar{h},p)=1- \epsilon$ under Assumption \ref{ass:almostsure}. Thus, $J_C(n,\bar{h},p)$ is strictly increasing in $p$. Therefore, taking the smallest power results in the smallest cost, which is achieved by $p_\epsilon(\bar{h})$.
\end{proof}

These policies are well suited for communication systems where the power $p$ cannot be fine tuned, but can only be set at certain levels, such as $\{0,P_{\max}\}$. The condition \eqref{eq:minimalh} is easily verifiable as $n$ and $\bar{h}$ belong to finite discrete sets and an exhaustively search can be applied to find all feasible values. 

\subsection{Unsaturated inversion policies}

In this subsection, we focus on channel inversion policies with $\bar{h} \geq \frac{\kappa}{P_{\max}} $, which implies that $\frac{\kappa}{h(t)} \leq P_{\max}$ for all $t$ when $h(t) \geq \bar{h}$. Consequently, from Lemma \ref{lem:IP}, we have that $\eta_I(\bar{h},\kappa)= \Pr(h(t) \geq \bar{h}) \psi(\kappa)$. For a given $\mu \in (0,1), n \in \Zo$ and $\bar{h} \in \mathcal{H}$, the smallest $\kappa$ satisfying the stability and convergence property is given by $\underline{\kappa}(\mu,n,\bar{h})$ from Section \ref{sec:fsets}. Observe that if $\underline{\kappa}(\mu,n,\bar{h})< \bar{h}P_{\max}  $, then the unsaturated inversion policy becomes feasible for any $\kappa \in [\underline{\kappa}(\mu,n,\bar{h}), \bar{h}P_{\max}] $. This allows us to derive the following theorem.

\begin{theorem} \label{th:unsat}
For any given $n \in \Zo, \mu \in (0,1)$ and $\bar{h} \in \mathcal{H}$ such that $\underline{\kappa}(\mu,n,\bar{h})\leq P_{\max} \bar{h} $, the optimal gain solving OP$_I$ is given by $\kappa^*$ obtained as follows: If a local minimum $\kappa^o \in \Rlp$ exists such that $ \frac{\partial J_{I}}{\partial \kappa}(n,\bar{h},\kappa^o) =0 $ and $ \frac{\partial^2 J_{I}}{\partial \kappa^2}(n,\bar{h},\kappa^o) >0 $, then the optimal gain $ \kappa^* \in \left\{\underline{\kappa}(\mu,n,\bar{h}),\kappa^o , P_{\max} \bar{h} \right\}  $. Otherwise, the optimal gain $ \kappa^* =\underline{\kappa}(\mu,n,\bar{h})$. \hfill $\Box$
\end{theorem}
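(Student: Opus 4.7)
My plan is to mirror the structure of the proof of Theorem \ref{th:optp}, since the optimization problem here has the same flavor: minimize a scalar continuous cost over a compact feasibility interval, where the cost is neither convex nor monotone in general.

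First, I would set up the restricted cost. Under the hypothesis $\bar{h} \geq \kappa/P_{\max}$ (which is equivalent to $\kappa \leq P_{\max}\bar{h}$), Lemma \ref{lem:IP} collapses to $\eta_I(\bar{h},\kappa) = \Pr(h(t) \geq \bar{h})\psi(\kappa)$, and the inversion transmit power is simply $\mathcal{P}(h) = \kappa/h$ for every $h \geq \bar{h}$. Substituting into \eqref{eq:costinvgen}, the cost becomes
\begin{equation}
J_I(n,\bar{h},\kappa) = \frac{P_S + \kappa\sum_{h\in\mathcal{H},\,h\geq\bar{h}}\rho(h)/h}{1 + n\,\Pr(h(t)\geq\bar{h})\,\psi(\kappa)},
\end{equation}
which is smooth in $\kappa$ on $\Rlp$.

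Next, I would identify the feasibility set. From Section \ref{sec:fsets}, the stability constraint $\beta(n,\eta_I(\bar{h},\kappa)) \leq \mu$ holds iff $\kappa \geq \underline{\kappa}(\mu,n,\bar{h})$ (using the monotonicities from Lemma \ref{prop:minpar}). Combining with the unsaturated condition yields the compact interval $[\underline{\kappa}(\mu,n,\bar{h}),\,P_{\max}\bar{h}]$, which is nonempty by hypothesis. Since $J_I(n,\bar{h},\cdot)$ is continuous on this compact set, the global minimum is attained, and it must occur either at a boundary point ($\underline{\kappa}(\mu,n,\bar{h})$ or $P_{\max}\bar{h}$) or at an interior critical point. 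At any interior critical point that is the global minimum, the second-order necessary condition gives $\partial^2 J_I/\partial\kappa^2 \geq 0$, so such a point is (generically) a strict local minimum of the type described by $\kappa^o$. This immediately yields the first case: if a strict local minimum $\kappa^o$ exists in $\Rlp$, then $\kappa^* \in \{\underline{\kappa}(\mu,n,\bar{h}),\kappa^o,P_{\max}\bar{h}\}$, discarding $\kappa^o$ from the candidate set if it falls outside the feasibility interval.

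For the contrapositive (the ``otherwise'' case), I would argue that the absence of a strict interior local minimum on $\Rlp$ forces $J_I$ to be monotonically increasing on $\Rlp$, so the minimum on the feasibility interval is attained at its left endpoint $\underline{\kappa}(\mu,n,\bar{h})$. The key ingredients for this monotonicity argument are: (i) $J_I(0) = P_S$, and (ii) as $\kappa\to\infty$, $\psi(\kappa)\to 1$ by SA\ref{ass:CSITpd}, so the denominator approaches $1 + n\,\Pr(h(t)\geq\bar{h})$ while the numerator grows linearly, giving $J_I(\kappa)\to\infty$. If $J_I$ were not monotonically increasing, then there would exist $\kappa_1<\kappa_2$ with $J_I(\kappa_1)>J_I(\kappa_2)$; since $J_I\to\infty$, continuity would produce an interior local minimum on an interval $[\kappa_1,\kappa_3]$ for some $\kappa_3>\kappa_2$, contradicting the assumption. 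The main obstacle here is this last step: the general fact that ``no local minimum'' implies monotonicity is the delicate piece, and it hinges on the smoothness of $J_I$ together with the boundary behaviours at $0$ and $\infty$. With this monotonicity established, $\kappa^* = \underline{\kappa}(\mu,n,\bar{h})$ is immediate and the proof is complete.
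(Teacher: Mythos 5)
Your setup (the simplified cost \eqref{eq:costunsat} and the compact feasibility interval $[\underline{\kappa}(\mu,n,\bar{h}),\,P_{\max}\bar{h}]$) matches the paper's, but from there you take a purely topological route (Weierstrass plus the boundary behaviours at $0$ and $\infty$), whereas the paper observes that $J_{\mathrm{I}}(n,\bar h,\cdot)$ has exactly the same functional form as $J_{\mathrm{PT}}(\cdot,n)$ from Proposition \ref{th:optPnoCSI} --- with $\psi(\kappa)$ playing the role of $1-e(p)$ --- and imports the ``strictly increasing or $N$-shaped'' classification proved there. That classification is the ingredient your argument is missing, and its absence leaves two genuine gaps. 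First, the candidate set $\{\underline{\kappa}(\mu,n,\bar{h}),\kappa^o,P_{\max}\bar h\}$ is only valid if the interior local minimum is \emph{unique}: your compactness argument shows the global minimizer is a boundary point or \emph{some} interior critical point, but without uniqueness that critical point need not be the particular $\kappa^o$ named in the statement. Second, in the ``otherwise'' branch your contradiction only produces \emph{an} interior local minimum, not one with $\partial^2 J_{\mathrm{I}}/\partial\kappa^2>0$; a degenerate critical point with vanishing second derivative would falsify the hypothesis of the first case without making $J_{\mathrm{I}}$ monotone, so ``no $\kappa^o$ of the stated type exists'' does not by itself imply ``increasing''. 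You flag both issues yourself with ``(generically)'' and ``the delicate piece'', but they cannot be closed by continuity and limit arguments alone.

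Both gaps are closed exactly where the paper closes them: by the sigmoidal structure of $\psi$ (item (i) of SA\ref{ass:CSITpd}). In the proof of Proposition \ref{th:optPnoCSI} one studies the reciprocal cost $\xi=1/J$, computes that at any critical point $\partial^2\xi/\partial p^2=-ne''(p^*)/p^*$, and uses the initially-convex-then-concave property to conclude that $\xi$ has a unique local minimum followed by a unique local maximum, so that $J$ is either strictly increasing or $N$-shaped with a \emph{unique} strict local minimum. Transferring this to $J_{\mathrm{I}}$ via the identification $\kappa\leftrightarrow p$ and $\psi(\kappa)\leftrightarrow 1-e(p)$ is essentially the entire content of the paper's proof, and it is the step you would need to add for your argument to be complete.
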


\begin{proof}
The mathematical properties of $ J_{\mathrm{I}}(n,\bar{h},\kappa) $ are identical to the properties of $J_{\mathrm{PT}}(p,n)$ for a given value of $\bar{h}$ with $\kappa$ being replaced by $p$ for unsaturated policies due to the first case of \eqref{eq:psr2}. We can thus follow the proof of Theorem \ref{th:optp} to prove this result.  
For unsaturated inversion policies, the cost function \eqref{eq:costinvgen}, which can be rewritten as
\begin{equation}
  J_{\mathrm{I}}(n,\bar{h},\kappa)  = \frac{P_S+\kappa \sum_{h \in \mathcal{H}, h \geq \bar{h}} \frac{ \rho(h)}{h} }{1+n \Pr(h(t) \geq \bar{h}) \psi(\kappa)} , \label{eq:costunsat}
\end{equation}
is N-shaped w.r.t $\kappa$. Consequently the optimal $\kappa$ for any given $n,\mu$ and $\bar{h}$ can be found using the same method explained in the proof of Theorem \ref{th:optp}.
\end{proof}

\section{Conditions ensuring SA\ref{ass:alpha1}}\label{sect:assumption}

Before illustrating the results of Section \ref{sec:comm} on an example, we demonstrate how to systematically satisfy SA\ref{ass:alpha1} for the case of a linear time-invariant  plant and controller. Afterwards, we consider again the nonlinear setting, and propose conditions to guarantee SA\ref{ass:alpha1} when the strategy used to generate $\hat y$ is based on zeroing and zero-order-hold, respectively. %Without loss of generality, we assume that\footnote{If this is not the case we can always redefine the output $y$ as $g_p(x_p)-g_p(0)$.} $g_p(0)=0$ in \eqref{eq:mainsys}. 

\subsection{Linear time-invariant systems} \label{sect:assumptionlinear}
We consider the case in which the plant \eqref{eq:mainsys} is linear and time-invariant, i.e.,
\begin{equation}
\begin{array}{rllll}
f_p(x_p,u) & = & A_p x_p+B_p u\\
g_p(x_p) & = & C_p x_p,
\end{array}
\label{eq:mainsyslin}
\end{equation}
where the pairs $(A_p,B_p)$ and $(A_p,C_p)$ are assumed to be stabilizable and detectable, respectively. Here, we can design an output-feedback stabilizing controller for system (\ref{eq:mainsyslin}) as
\begin{equation}
\begin{array}{rllll}
f_c(x_c,y) & = & A_c x_c + B_c y\\ 
g_c(x_c,y)& = & C_c x_c + D_c y,
\end{array}
\label{eq:maincontrollin}
\end{equation}
in the sense that the closed-loop state matrix $\mathcal{A}_0:= \left[ \begin{array}{cc}A_p +  B_p D_c C_p& B_pC_c \\ B_c C_p  & A_c  \end{array} \right]$ is Schur. Between two successive successful transmission instants, $\hat{y}$ is held using a linear holding function $\hat{f}(y)  =  C_gy$ for some $C_g \in \Rl^{s_y \times s_y}$ and any $y \in \Rl^{s_y}$. 

Let $\chi:=(x_p,x_c,\hat{y})$ as in Section \ref{sec:sysm}, and we obtain \eqref{eq:mainsys2} with
\begin{equation}
\begin{array}{llr}
f_S(\chi) & = & \mathcal{A}_S  \chi\\ 
f_U(\chi)& = &\mathcal{A}_U \chi, \label{eq:lin1}
\end{array}
\end{equation}
where \begin{equation}
\mathcal{A}_S :=  \left[ \begin{array}{ccc}A_p +  B_p D_c C_p& B_pC_c & 0\\ B_c C_p  & A_c & 0 \\ C_g C_p & 0 & 0 \end{array} \right],  \end{equation}  
and
\begin{equation}
\mathcal{A}_U :=  \left[ \begin{array}{ccc}A_p & B_p C_c  & B_p  D_c  \\ 0  & A_c & B_c \\ 0 & 0 & C_g  \end{array} \right]. \label{eq:lin3}
\end{equation} 

The next proposition ensures that Assumption 1 always holds for system \eqref{eq:mainsys2} with \eqref{eq:lin1}-\eqref{eq:lin3}.

\begin{proposition} \label{prop:linearcase}
SA\ref{ass:alpha1} holds for system \eqref{eq:mainsys2} with \eqref{eq:lin1}-\eqref{eq:lin3} by taking $V(\chi)= \chi^{\top} P \chi$ for any $\chi \in \Rl^{s_\chi}$ with $P \in \Rl^{s_\chi \times s_\chi}$ symmetric, positive definite and such that
\begin{equation}
\begin{array}{llr}
 \mathcal{A}_S^{\top}  P  \mathcal{A}_S  & \leq & a_S  P\\ 
 \mathcal{A}_U^{\top}  P  \mathcal{A}_U  & \leq & a_U  P\\ 
\end{array} \label{eq:lmilin}
\end{equation}
with $a_S \in [0,1)$, $a_U >a_S$, $\overline{\alpha}(s)=\lambda_{\max}(P)s^2$ and $\underline{\alpha}(s)=\lambda_{\min}(P)s^2$.  \hfill $\Box$
\end{proposition}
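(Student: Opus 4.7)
The plan is to verify SA\ref{ass:alpha1} by checking its three parts directly from the quadratic form $V(\chi)=\chi^\top P\chi$ and the two LMIs, and then separately to justify that a feasible triple $(P,a_S,a_U)$ actually exists. The three inequalities in SA\ref{ass:alpha1} follow almost mechanically. For the sandwich bounds \eqref{eq:ass-sandwich-bounds}, since $P=P^\top\succ 0$ we have the classical estimates $\lambda_{\min}(P)|\chi|^2\le\chi^\top P\chi\le\lambda_{\max}(P)|\chi|^2$, which match the proposed $\underline\alpha$ and $\overline\alpha$ (both of class $\mathcal K_\infty$). For the contraction \eqref{eq:a1ass1}, using $f_S(\chi)=\mathcal A_S\chi$ from \eqref{eq:lin1}, the first LMI in \eqref{eq:lmilin} gives $V(f_S(\chi))=\chi^\top \mathcal A_S^\top P\mathcal A_S\chi\le a_S\,\chi^\top P\chi=a_SV(\chi)$; the growth bound \eqref{eq:a0ass1} follows in the same way from the second LMI.

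The real content of the statement is therefore establishing that the set of $(P,a_S,a_U)$ appearing in the proposition is non-empty, i.e., that one can actually find $P\succ 0$ and constants $a_S\in[0,1)$, $a_U>a_S$ satisfying \eqref{eq:lmilin}. I would base this on the block structure of $\mathcal A_S$. Because the third block column of $\mathcal A_S$ is zero, its spectrum consists of $s_y$ zero eigenvalues plus the eigenvalues of the upper-left $(s_p+s_c)\times(s_p+s_c)$ block, which is precisely $\mathcal A_0$. Since $(A_p,B_p)$ is stabilizable and $(A_p,C_p)$ is detectable, the controller \eqref{eq:maincontrollin} can be chosen so that $\mathcal A_0$ is Schur, and hence so is $\mathcal A_S$.

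With $\mathcal A_S$ Schur, I would pick any $a_S\in(\rho(\mathcal A_S)^2,1)$. Then $\mathcal A_S/\sqrt{a_S}$ is also Schur, so the discrete-time Lyapunov equation yields a symmetric $P\succ 0$ with $(\mathcal A_S/\sqrt{a_S})^\top P(\mathcal A_S/\sqrt{a_S})\preceq P$, which is exactly the first inequality in \eqref{eq:lmilin}. Once $P$ is fixed, the matrix $P^{-1/2}\mathcal A_U^\top P\mathcal A_U P^{-1/2}$ is symmetric positive semidefinite, and any $a_U\ge\lambda_{\max}\bigl(P^{-1/2}\mathcal A_U^\top P\mathcal A_U P^{-1/2}\bigr)$ satisfies the second LMI; enlarging $a_U$ if necessary guarantees $a_U>a_S$, completing the existence argument.

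I expect the only subtle point to be the Schur property of $\mathcal A_S$. It is easy once one notices the block-triangular zero column, but worth stating explicitly because $\mathcal A_S$ is the closed-loop matrix during successful transmissions when the networked copy $\hat y$ is updated from the current output; the zeros reflect the fact that the new $\hat y(t+1)$ depends on $x_p(t)$ alone, not on the previous $\hat y(t)$. All remaining steps are routine manipulations of quadratic forms and standard Lyapunov-equation results, so the proof reduces to linking these observations together.
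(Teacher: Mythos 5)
Your proof is correct, but it establishes feasibility of \eqref{eq:lmilin} by a genuinely different construction from the paper's. You exploit the zero third block column of $\mathcal{A}_S$ to conclude that its spectrum is that of $\mathcal{A}_0$ together with $s_y$ zero eigenvalues, hence that $\mathcal{A}_S$ is Schur; you then obtain $P$ by solving the discrete-time Lyapunov equation for $\mathcal{A}_S/\sqrt{a_S}$ with $a_S$ strictly between the squared spectral radius of $\mathcal{A}_0$ and $1$, and finally read off $a_U$ from the largest eigenvalue of $P^{-1/2}\mathcal{A}_U^{\top} P \mathcal{A}_U P^{-1/2}$. The paper never argues that $\mathcal{A}_S$ itself is Schur: it takes a Lyapunov matrix $P_0$ for $\mathcal{A}_0$ alone, forms the block-diagonal candidate $P=\mathrm{diag}(P_0,\epsilon I)$, computes $\mathcal{A}_S^{\top} P \mathcal{A}_S$ explicitly, and absorbs the resulting extra term $\epsilon |C_g C_p|^2 |x_p|^2$ into the decay rate, yielding $a_S=a_0+\epsilon |C_g C_p|^2 \lambda_{\min}(P_0)^{-1}<1$ for $\epsilon$ sufficiently small, with $a_U$ then obtained from a norm bound. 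Your route is cleaner and yields the sharpest achievable $a_S$ for a quadratic $V$ (any value above the squared spectral radius of $\mathcal{A}_0$, since the appended eigenvalues are zero), whereas the paper's perturbation construction keeps $P$ explicitly tied to the emulation-stage Lyapunov matrix $P_0$ and shows that $a_S$ can be pushed arbitrarily close to the original closed-loop rate $a_0$ — a form that is convenient when $P_0$ is already available from the controller design. Both arguments dispose correctly of the routine verification that \eqref{eq:lmilin} implies the three conditions of SA\ref{ass:alpha1} for $V(\chi)=\chi^{\top}P\chi$.
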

\begin{proof}
Since $ \mathcal{A}_0$ is Schur, there exists $P_0 \in \Rl^{s_x \times s_x}$ which is symmetric, positive definite and such that
\begin{equation}
 \mathcal{A}_0^{\top}  P_0  \mathcal{A}_0   \leq  a_0  P_0 \label{eq:linearorg} \end{equation}
for some $a_0 \in [0,1)$. Let $P:= \left[ \begin{array}{cc} P_0 & 0 \\ 0 & \epsilon  \end{array} \right]$ with $\epsilon>0$, which is thus symmetric and positive definite. We have that
 \begin{equation}
     \mathcal{A}_S^{\top} P  \mathcal{A}_S =\left[ \begin{array}{cc}  \mathcal{A}_0^{\top} P_0  \mathcal{A}_0 + \epsilon C_g^{\top} C_p^{\top} C_g C_p  & 0 \\ 0  & 0 \end{array} \right] . \label{eq:linearint1}
      \end{equation}
Let $\chi \in \Rl^{s_\chi}$, we have
\begin{equation} \begin{array}{lll}
    V(\mathcal{A}_S \chi)   & =&  \chi^{\top}   \mathcal{A}_S^{\top} P  \mathcal{A}_S \chi \\
     & =& x^{\top} \mathcal{A}_0^{\top} P_0  \mathcal{A}_0 x + \epsilon x_p^{\top} C_g^{\top} C_p^{\top} C_g C_p x_p
\end{array}
\end{equation}
using \eqref{eq:linearint1} and since $x=(x_p,x_c)$. In view of \eqref{eq:linearorg}, we have 
\begin{equation} \begin{array}{lll}
    V(\mathcal{A}_S \chi)   & \leq & a_0 x^{\top} P_0 x + \epsilon |C_g C_p|^2 |x_p|^2  \\
     & \leq &a_S x^{\top} P_0 x \leq a_S V(\chi)
\end{array} \label{eq47}
\end{equation}
with $a_S: = a_0+ \epsilon |C_g C_p|^2 \lambda_{\min}(P_0)^{-1}$. \textcolor{black}{In the first line of \eqref{eq47}, we apply \eqref{eq:linearorg} to bound the term $x^{\top} \mathcal{A}_0^{\top} P_0  \mathcal{A}_0 x$. Since $x=(x_p,x_c)$, we can always find some $a_S>0$ such that the first line of \eqref{eq47} is bounded by the second line. Next, as the second term depends on $\epsilon$, by taking $\epsilon$ sufficiently small, we can always find $a_S \in (a_0,1)$ as $a_0<1$.} On the other hand, we have for any $\chi \in \Rl^{s_\chi}$
\begin{equation} \begin{array}{lll}
    V(\mathcal{A}_U \chi)   & =&  \chi^{\top}   \mathcal{A}_U^{\top} P  \mathcal{A}_U \chi \\
     &  \leq & |\mathcal{A}_U^{\top} P  \mathcal{A}_U|  |\chi|^2 \leq a_U V(\chi)
\end{array}
\end{equation}
with $a_U \geq |\mathcal{A}_U^{\top} P  \mathcal{A}_U| \lambda_{\min}(P)^{-1}$. Since all these terms are positive, we can always find some $a_U>a_S$ thus satisfying \eqref{eq:lmilin}, and consequently SA\ref{ass:alpha1}.
\end{proof}

Conditions \eqref{eq:lmilin} are linear matrix inequalities (LMIs) with respect to unknowns $P$, $a_S \in (0,1)$ and $a_U >a_S$, which always have a solution in view of Proposition \ref{prop:linearcase}. %While the $P$ obtained by following the proof of Proposition \ref{prop:linearcase} does satisfy the conditions required for SA\ref{ass:alpha1}, the resulting $a_U$ may be quite large. This may result in a poor communication cost due to frequent communications being required in order to satisfy the stability criteria. 
A desired pair $(a_U$, $a_S)$ less conservative than the one obtained following the proof can be tested for feasibility numerically using an LMI solver.

\subsection{Zeroing strategy}

We return to a general plant and controller models as in \eqref{eq:mainsys} and \eqref{eq:maincontrol}, and we focus on zeroing strategies to generate $\hat{y}$, i.e., $\hat{f}=0$. We suppose that controller \eqref{eq:maincontrol} has been designed such that the following properties hold.

\begin{assumption}\label{ass:zeroing} There exist $W:\mathbb{R}^{s_p+s_c}\to\mathbb{R}$ continuous, $\underline\alpha_W,\overline\alpha_W\in\mathcal{K}_\infty$, $a_{W,1}\in(0,1)$ and $a_{W,0} >0$ such that, for any $(x_p,x_c)\in\mathbb{R}^{s_p+s_c}$:
    \begin{enumerate}
        \item[(i)] \hspace{-0.2cm} $\underline\alpha_W(|(x_p,x_c)|)\leq W(x_p,x_c)\leq\overline\alpha_W(|(x_p,x_c)|)$;
        \item[(ii)]\hspace{-0.2cm} $W(f_p(x_p,g_c(x_c,g_p(x_p)),f_c(x_c,g_p(x_p))) \leq a_{W,1} W(x)$;
        \item[(iii)]\hspace{-0.2cm} $W\left(f_p(x_p,g_c(x_c,0),f_c(x_c,0)\right) \leq a_{W,0} W(x)$. \hfill $\Box$
    \end{enumerate}
\end{assumption}

Items (i)-(ii) of Assumption \ref{ass:zeroing} are equivalent to the fact that the origin of \eqref{eq:mainsys}-\eqref{eq:maincontrol} is UGAS when $f_p$, $f_c$, $g_p$ and $g_c$ are continuous, see \cite{Jiang-Wang-scl02(converse)}. Item (iii), on the other hand, is an exponential growth condition on $W$ when a transmission fails and  $\hat f$ in \eqref{eq:holding} is the zero function. The next proposition ensures the satisfaction of SA\ref{ass:alpha1}.

\begin{proposition}\label{prop:ass-zeroing} Suppose Assumption \ref{ass:zeroing} holds, then SA\ref{ass:alpha1} is verified with $V:\chi\mapsto W(x_p,x_c)+ |\hat y|$,  $a_S=a_{W,1}$, $a_U=a_{W,0}$,  $\underline\alpha(s)=\min\{\underline\alpha_W(s/2),s/2\}$ and $\overline{\alpha}(s)=\overline\alpha_W(s)+s$ for any $s\geq 0$. \hfill $\Box$ 
\end{proposition}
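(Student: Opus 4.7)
The strategy is to verify each of the three conditions in SA\ref{ass:alpha1} separately for the candidate $V(\chi) = W(x_p, x_c) + |\hat{y}|$. The key structural observation is that under the zeroing strategy, $\hat{f} \equiv 0$ makes the $\hat{y}$-component of both $f_S(\chi)$ and $f_U(\chi)$ identically zero, so $V(f_S(\chi))$ and $V(f_U(\chi))$ collapse to pure $W$-terms, to which items (ii) and (iii) of Assumption \ref{ass:zeroing} apply.

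For the sandwich bounds \eqref{eq:ass-sandwich-bounds}, the upper bound is immediate: by item (i) of Assumption \ref{ass:zeroing} and monotonicity of $\overline{\alpha}_W \in \mathcal{K}_\infty$, we have $W(x_p, x_c) \leq \overline{\alpha}_W(|(x_p,x_c)|) \leq \overline{\alpha}_W(|\chi|)$, and $|\hat{y}| \leq |\chi|$, so $V(\chi) \leq \overline{\alpha}_W(|\chi|) + |\chi| = \overline{\alpha}(|\chi|)$. The lower bound requires a short case analysis based on $|\chi| \leq |(x_p, x_c)| + |\hat{y}|$: at least one of these two summands must exceed $|\chi|/2$. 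If $|(x_p, x_c)| \geq |\chi|/2$, item (i) gives $V(\chi) \geq W(x_p, x_c) \geq \underline{\alpha}_W(|\chi|/2)$; if $|\hat{y}| \geq |\chi|/2$, then $V(\chi) \geq |\hat{y}| \geq |\chi|/2$. Taking the minimum of the two possibilities yields $V(\chi) \geq \underline{\alpha}(|\chi|)$, which is precisely the prescribed lower envelope.

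For the decrease conditions, after a successful transmission the third component of $f_S(\chi)$ is $\hat{f}(g_p(x_p)) = 0$, so $V(f_S(\chi))$ reduces to $W(f_p(x_p, g_c(x_c, g_p(x_p))), f_c(x_c, g_p(x_p)))$; item (ii) of Assumption \ref{ass:zeroing} bounds this by $a_{W,1} W(x_p, x_c) \leq a_{W,1} V(\chi)$, establishing \eqref{eq:a1ass1} with $a_S := a_{W,1} \in [0,1)$. The failed-transmission case proceeds analogously: the third component of $f_U(\chi)$ is $\hat{f}(\hat{y}) = 0$, and item (iii) bounds the remaining $W$-term by $a_{W,0} W(x_p, x_c) \leq a_{W,0} V(\chi)$, yielding \eqref{eq:a0ass1} with $a_U := a_{W,0}$. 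Enlarging $a_{W,0}$ if necessary ensures the strict ordering $a_U > a_S$ required by SA\ref{ass:alpha1}.

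The main obstacle is really only the case-split in the lower sandwich bound; the decrease conditions themselves fall out immediately from the zeroing structure, since items (ii) and (iii) of Assumption \ref{ass:zeroing} are tailored to exactly the $f_S$ and $f_U$ updates produced by setting the held value to zero, and the $|\hat{y}|$ summand in $V$ then plays a purely passive role (it is removed after one step because $\hat{f} \equiv 0$, but appears in the right-hand side via the trivial inequality $V(\chi) \geq W(x_p,x_c)$).
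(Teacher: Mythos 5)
Your proof is correct and takes essentially the same route as the paper: the zeroing structure makes the $\hat y$-component of both $f_S(\chi)$ and $f_U(\chi)$ vanish, so items (ii)--(iii) of the zeroing assumption bound $V(f_S(\chi))$ and $V(f_U(\chi))$ by $a_{W,1}V(\chi)$ and $a_{W,0}V(\chi)$ exactly as you argue, and the sandwich bounds follow as you describe. The only difference is that you carry out the two-case argument for the lower bound explicitly (at least one of $|(x_p,x_c)|$ and $|\hat y|$ exceeds $|\chi|/2$), whereas the paper delegates this step to a cited remark of Laila and Ne\v{s}i\'c; your version is self-contained and equally valid, and your closing observation that $a_{W,0}$ may need to be enlarged to guarantee $a_U>a_S$ is a sensible detail the paper leaves implicit.
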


\emph{Proof:} Let $\chi\in\mathbb{R}^{s_\chi}$, $V(\chi)\leq\overline\alpha_W(|(x_p,x_c)|)+|\hat y|$ in view of item (i) of Assumption \ref{ass:zeroing}, from which we derive that $V(\chi)\leq\overline\alpha_W(|\chi|)$ with $\overline\alpha_W$ given in Proposition \ref{prop:ass-zeroing}. We obtain the lower-bound on $V$ by invoking \cite[Remark 2.3]{Laila-Nesic-cdc-02(small-gain)}. On the other hand, in view of item (ii) of Assumption \ref{ass:zeroing}, $V(f_S(\chi)) =W\left(f_p(x_p,g_c(x_c,g_p(x_p)),f_c(x_c,g_p(x_p))\right) \leq  a_S W(x) \leq a_{W,1} V(\chi)$. We similarly derive from item (iii) of Assumption \ref{ass:zeroing} that $V(f_U(\chi))\leq a_{W,0} V(\chi)$, which concludes the proof. \hfill $\blacksquare$

%We have $V(f_S(\chi)) = U\left(f_p(x_p,g_c(x_c,g_p(x_p)),f_c(x_c,g_p(x_p))\right) + 0$ in view of \eqref{eq:holding}

\subsection{Zero-order-hold strategy}

When zero-order-hold devices are used to generate $\hat y$, we introduce Assumption \ref{ass:zoh} to conclude about the satisfaction of SA\ref{ass:alpha1}. 

\begin{assumption} \label{ass:zoh} There exist $W:\mathbb{R}^{s_p+s_c}\to\mathbb{R}$ continuous, $\underline a_W,\overline a_W>0$, $a_{W,1}\in(0,1)$ and $a_{W,0},b_0\geq 0$ such that, for any $\chi\in\mathbb{R}^{s_\chi}$:
    \begin{enumerate}
        \item[(i)]\hspace{-0.2cm} $\underline a_W |(x_p,x_c)|^2\leq W(x_p,x_c)\leq\overline a_W |(x_p,x_c)|^2$;
        \item[(ii)]\hspace{-0.2cm} $W\left(f_p(x_p,g_c(x_c,g_p(x_p)),f_c(x_c,g_p(x_p))\right) \leq a_{W,1} W(x)$;
        \item[(iii)] \hspace{-0.2cm}$W\left(f_p(x_p,g_c(x_c,\hat y),f_c(x_c,\hat y)\right) \leq a_{W,0} W(x) + b_0 |\hat y|^2$. \hfill $\Box$
    \end{enumerate}
%     \item[(ii)] $|g_p(x_p)|\leq c|x_p|$. \hfill $\Box$
% \end{enumerate}  
\end{assumption}

Items (i)-(ii) of Assumption \ref{ass:zoh} are equivalent to the fact that the origin of \eqref{eq:mainsys}-\eqref{eq:maincontrol} is uniformly globally exponentially stable under conditions as mentioned after Assumption \ref{ass:zeroing}. Item (iii) is an exponential growth condition on $W$ when a transmission fails, which involves $\hat y$ this time because of the use of a zero-order-hold strategy. 

We also require the output map to be linearly bounded.

\begin{assumption}\label{ass:zoh-output-map}
There exist $c\geq 0$ such that $|g_p(x_p)|\leq c|x_p|$  for any $x_p\in \mathbb{R} ^{s_p}$. \hfill $\Box$
\end{assumption}

Assumption \ref{ass:zoh-output-map} is verified when $y=C_p x_p$ with $C_p$ a real matrix (for instance) in which case $c=||C_p||$. The next proposition ensures the satisfaction of SA\ref{ass:alpha1}.

\begin{proposition}\label{prop:ass-zoh} Suppose Assumptions \ref{ass:zoh}-\ref{ass:zoh-output-map} hold, then SA\ref{ass:alpha1} is verified with $V:\chi\mapsto W(x_p,x_c)+ \nu|\hat y|^2$ for some $\nu\in \displaystyle \left(0,(1-a_{W,1})\frac{\underline a_W}{c^{2}}\right)$, $a_S=a_{W,1}+\nu c^2/\underline{a}_W$, $a_U=\max\{a_{W,0}, b_0/\nu + 1\}$ given in Assumption \ref{ass:zeroing},  $\underline\alpha(s)=\min\{\underline\alpha_U(s/2),\nu (s/2)^2\}$ and $\overline{\alpha}(s)=\overline\alpha_U(s)+\nu s^2$ for any $s\geq 0$. \hfill $\Box$ 
\end{proposition}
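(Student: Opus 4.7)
The plan is to mimic the structure of the proof of Proposition \ref{prop:ass-zeroing}, with the key difference that, under zero-order-hold, $\hat y$ is not reset to zero between transmissions. This forces the Lyapunov function to include a penalty on $\hat y$, and the free parameter $\nu$ must be tuned to handle both the successful and unsuccessful transmission cases simultaneously.

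First, I would establish the sandwich bounds on $V(\chi)=W(x_p,x_c)+\nu|\hat y|^2$. The upper bound follows directly from item (i) of Assumption \ref{ass:zoh}: $V(\chi)\leq \overline a_W|(x_p,x_c)|^2+\nu|\hat y|^2\leq(\overline a_W+\nu)|\chi|^2$, giving the claimed $\mathcal{K}_\infty$ upper bound. For the lower bound, I would observe that at least one of $|(x_p,x_c)|\geq |\chi|/2$ or $|\hat y|\geq |\chi|/2$ must hold; combining this dichotomy with item (i) of Assumption \ref{ass:zoh} yields $V(\chi)\geq \min\{\underline a_W(|\chi|/2)^2,\,\nu(|\chi|/2)^2\}$, which matches the claimed $\underline\alpha$.

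For the successful-transmission bound, under $f_S$ the third component becomes $g_p(x_p)$, so $V(f_S(\chi))=W(f_p(\cdot),f_c(\cdot))+\nu|g_p(x_p)|^2$. I would apply item (ii) of Assumption \ref{ass:zoh} to the $W$-part, and Assumption \ref{ass:zoh-output-map} followed by the lower bound in item (i) of Assumption \ref{ass:zoh} to the $\hat y$-part, giving $\nu|g_p(x_p)|^2\leq \nu c^2|x_p|^2\leq (\nu c^2/\underline a_W)W(x_p,x_c)$. Adding these and using $W(x_p,x_c)\leq V(\chi)$ leads to $V(f_S(\chi))\leq (a_{W,1}+\nu c^2/\underline a_W)V(\chi)=a_S V(\chi)$. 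The constraint $\nu<(1-a_{W,1})\underline a_W/c^2$ guarantees $a_S\in[0,1)$.

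For the unsuccessful-transmission bound, the crucial observation is that zero-order-hold preserves $\hat y$, so $V(f_U(\chi))=W(f_p(\cdot),f_c(\cdot))+\nu|\hat y|^2$. Applying item (iii) of Assumption \ref{ass:zoh} yields $V(f_U(\chi))\leq a_{W,0}W(x_p,x_c)+(b_0+\nu)|\hat y|^2$. To bound this by $a_U V(\chi)=a_U W(x_p,x_c)+a_U\nu|\hat y|^2$, I need $a_U\geq a_{W,0}$ and $a_U\nu\geq b_0+\nu$, i.e.\ $a_U\geq 1+b_0/\nu$, which motivates exactly the stated choice $a_U=\max\{a_{W,0},1+b_0/\nu\}$. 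Since $\nu>0$ the second term is strictly greater than $1$, so $a_U>1>a_S$ holds automatically, and the ordering requirement of SA\ref{ass:alpha1} is met for free.

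The only subtlety, rather than a genuine obstacle, is the parameter trade-off in $\nu$: making $\nu$ small is required to keep $a_S<1$, but a too small $\nu$ inflates $a_U$ through the $b_0/\nu$ term. This is purely a sizing issue that does not affect feasibility of SA\ref{ass:alpha1}, since the constraints on $a_S$ and $a_U$ decouple: $a_S<1$ is ensured by the stated range for $\nu$, and $a_U>a_S$ follows automatically.
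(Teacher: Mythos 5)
Your proof is correct and follows essentially the same route as the paper's: the same bound $\nu|g_p(x_p)|^2\leq(\nu c^2/\underline a_W)W(x_p,x_c)$ via Assumption \ref{ass:zoh-output-map} and item (i) of Assumption \ref{ass:zoh} for the successful case, the same factoring $a_{W,0}W(x)+(b_0+\nu)|\hat y|^2\leq\max\{a_{W,0},b_0/\nu+1\}V(\chi)$ for the unsuccessful case, and the standard norm dichotomy for the sandwich bounds that the paper delegates to the proof of Proposition \ref{prop:ass-zeroing}. (One cosmetic remark: when $b_0=0$ the term $b_0/\nu+1$ equals $1$ rather than exceeding it, but $a_U\geq 1>a_S$ still gives the ordering required by SA\ref{ass:alpha1}.)
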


\emph{Proof:} The proof of (\ref{eq:ass-sandwich-bounds}) follows similar lines as in the proof of Proposition \ref{prop:ass-zeroing}. Let $\chi\in\mathbb{R}^{s_\chi}$. In view of item (ii) of Assumption \ref{ass:zoh}, $V(f_S(\chi)) =W\left(f_p(x_p,g_c(x_c,g_p(x_p)),f_c(x_c,g_p(x_p))\right) + \nu |g_p(x_p)|^2\leq  a_{W,1} W(x) + \nu |g_p(x_p)|^2$. According to Assumption \ref{ass:zoh-output-map}, $|g_p(x_p)|^2\leq c^2 |x_p|^2$, and, in view of item (i) of Assumption \ref{ass:zoh}, $|g_p(x_p)|^2\leq c^2/\underline{a}_W W(x_p,x_c) \leq c^2/\underline{a}_W V(\chi)$. Consequently, $V(f_S(\chi))\leq a_{W,1} W(x) + \nu c^2/\underline{a}_W V(\chi) \leq \left(a_{W,1}+\nu c^2/\underline{a}_W\right) V(\chi)= a_S V(\chi)$ and $a_S\in(0,1)$ in view of the definition of $\nu$.% in Proposition \ref{prop:ass-zoh}.

On the other hand, $V(f_U(\chi)) \leq  a_{W,0} W(x) + b_0 |\hat y|^2 + \nu |\hat y|^2$ in view of (\ref{eq:deffu}) and item (iii) of Assumption \ref{ass:zoh}. Hence, $V(f_U(\chi)) \leq  \max\{a_{W,0}, b_0/\nu + 1\}(W(x)+\nu |\hat y|^2)=\max\{a_{W,0}, b_0/\nu + 1\} V(\chi) = a_U V(\chi)$. We have proved that the conditions in SA\ref{ass:alpha1} are verified. \hfill $\blacksquare$

\section{Numerical examples}\label{sec:nr}
\subsection{Single link robot arm and its controller}
We illustrate the results of Section III on a single link robot arm, whose model is obtained by discretizing the continuous-time system using an Euler method with sampling period of $10^{-3}$ seconds. System \eqref{eq:mainsys} with plant state $x_p = (x_1, x_2) \in \mathbb{R}^2$ is given by
\begin{equation}
\left( \begin{array}{c} x_1(t+1) \\ x_2(t+1)
\end{array}\right)=  \left(   \begin{array}{cc}
   x_1(t)+ 10^{-3} x_2(t) \\
    x_2(t) + 10^{-3}(\sin(x_1(t)) + u(t) )
     \end{array} \right). \label{eq:numsys}
\end{equation}
The control \eqref{eq:maincontrol} is given by strategy $u= -\sin(x_1) -25x_1 - 10x_2 $ and we use zero-order-holds to implement it.

SA\ref{ass:alpha1} is verified with $V(\chi) \mapsto \chi^{\top} P \chi$, $a_S=0.98$ and $a_U=1.0009$ where $$P=\left( \begin{array}{cccc}
    0.0384 &  -0.0019 &  -0.0336  &  0.0031\\
   -0.0019 &   0.0015  &  0.0033  & -0.0008\\
   -0.0336  &  0.0033   & 0.0341 &  -0.0032\\
    0.0031 &  -0.0008  & -0.0032  &  0.0009\\
\end{array}  \right).$$% and we bound the growth of $V$ during failed communication instants by considering $|\sin(x_1)| \leq |x_1|$ and the relevant polytopic system.
\subsection{Communication settings}
We fix $\mu=0.999$ and apply Proposition \ref{prop:etastab} to obtain the minimum $\eta$ required to ensure the desired stability property for $n\in \{0,1,\dots,10\}$ as $\{ 0.092,  0.097,  0.11 ,0.12, 0.14 ,0.16,  0.19,0.24,  0.32, 0.47, 0.9\}$ respectively. 

We first study the case where CMs are available at the transmitter and consider the packet success rate to be given by $\phi(\gamma)= (0.5+0.5 \mathrm{erf}(\sqrt{\gamma}))^{32}$, which corresponds to the probability that every single bit in a packet of 32 bits is decoded correctly under a white Gaussian noise. For the communication channel model, we consider that the CMs are the quantization of a Rayleigh slow-fading channel with a probability distribution function $0.5\exp(-0.5 \omega) $. We take $\mathcal{H}:=\{0,0.05,\dots,5\}$ in SA\ref{ass:estimateh} and thus obtain $$\rho(h)= \int_{\omega=h}^{h+0.05} 0.5\exp(-0.5 \omega) d\omega, $$ for all $h \in \mathcal{H} \setminus \{5\}$ and $$\rho(5)= \int_{\omega=5}^{\infty} 0.5\exp(-0.5 \omega) d\omega.$$

\subsection{Pure channel thresholds}

We fix $P_S=0$ and $N=0$ to obtain Figure \ref{fig:Jvsh0} which depicts the cost function using a constant power level and a channel inversion policy for feasible values of $\bar{h}$ and optimal $p$ and $\kappa$ respectively based on Proposition~\ref{prop:purech}. We obtain a minimum cost of $0.12$ with channel inversion and $0.16$ for the constant power policy with $\bar{h}=2.2$. However, note that measuring the channel is not always possible and when it is possible, $P_S$ is added to the final cost expression for $n=0$ as seen in \eqref{eq:costTTCSI0nI}. Therefore, if $P_S >0.04$, the constant power policy will outperform the inversion policy. 

\begin{figure}[ht]
  \vspace*{1em}
  %\centering
 \hspace{-3mm}\includegraphics[width=0.53\textwidth,trim={0 9cm 0 9cm},clip]{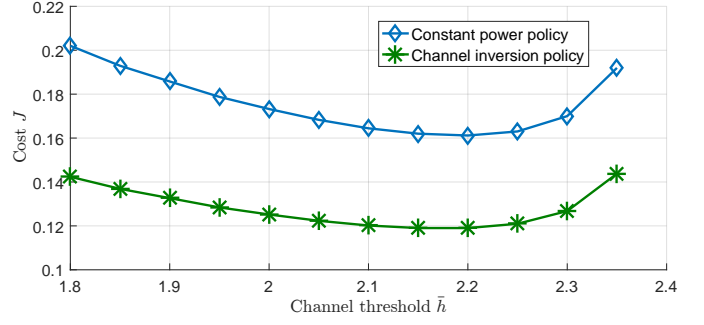}
  \caption{The cost $J$ plotted for the optimal 
 $p$ (under a constant power policy) or $\kappa$ (under a channel inversion policy) for feasible values of $\bar{h} \in \{1.8,1.9,\dots,3\}$ with $n=0$.}\label{fig:Jvsh0}
% \vspace{0.3cm} 
\end{figure}

%\begin{figure}[h]
  %\vspace*{1em}
  %\centering
 %\hspace{-3mm}\includegraphics[width=0.53\textwidth,trim={0 9cm 0 9cm},clip]{}
 % \caption{The cost $J$ plotted for the feasible values of $N$ for a binary channel model.}\label{fig:Jvsnb}
% \vspace{0.3cm} 
%\end{figure}
\subsection{Pure time thresholds}
Next, we look at the case where CMs are not available and consider parameters such that $e(p) \mapsto 1 - \exp(-1/p)$, which verifies SA\ref{ass:CSITpd}, see \cite{ozarow1994information} for details. This results in $\underline{p}_{10}=9.5$ with $P_{\max}=10$ using \eqref{eq:muofp} and Theorem \ref{th:sstab}. The value of $P_S$ is irrelevant in this case as we never measure the channel.

In Figure \ref{fig:pvn}, we plot the optimal power $p_n^*$ minimizing $J_{\mathrm{PT}}(p,n)$ for $n \in \{0,\dots,19\}$ and compare it with the required power $\underline{p}_n$ to ensure the convergence property \eqref{eq:stmu}. We note that $\underline{p}_{n}$ is not always the optimal power as explained in Theorem \ref{th:optp}. In Figure \ref{fig:noCSI}, we plot the average power consumed $J_{\mathrm{PT}}(p_n^*,n)$ with respect to feasible values of $n$ for given values of $\mu$, when using the optimal power $p_n^*$ as defined in Theorem \ref{th:optp}. We note that using the largest values of feasible $n$ results in a higher communication cost because while the frequency of communications decreases, the power required to stabilize the system also increases with $n$. The optimal $n$ for $\mu \in \{0.995,0.999,0.9999\}$ can be observed to be $1,8, 18$ respectively. We observe that a smaller $\mu$ demands more frequent communication, leading to a higher communication cost, but ensures a faster guaranteed convergence of the Lyapunov function along the solutions to the WCNS.

\begin{figure}[ht]
  \vspace*{1em}
 \centering
 \hspace{-3mm}\includegraphics[width=0.53\textwidth,trim={0 9cm 0 9cm},clip]{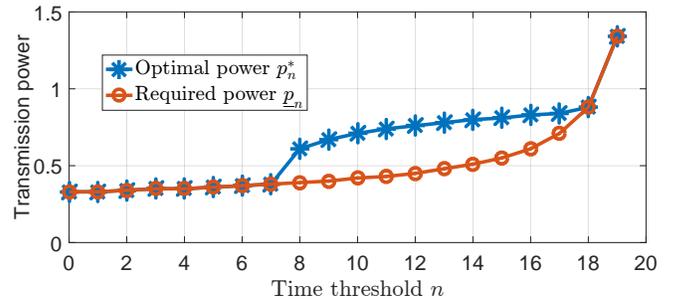}
 \caption{We plot the optimal power $p^*_n$ minimizing $J_{\mathrm{PT}}(p,n)$, and the required power $\underline{p}_n$ for given values of $n$.}\label{fig:pvn}
 \vspace{0.3cm} 
\end{figure}

\begin{figure}[ht]
  \vspace*{1em}
  %\centering
 \hspace{-3mm}\includegraphics[width=0.53\textwidth,trim={0 9cm 0 9cm},clip]{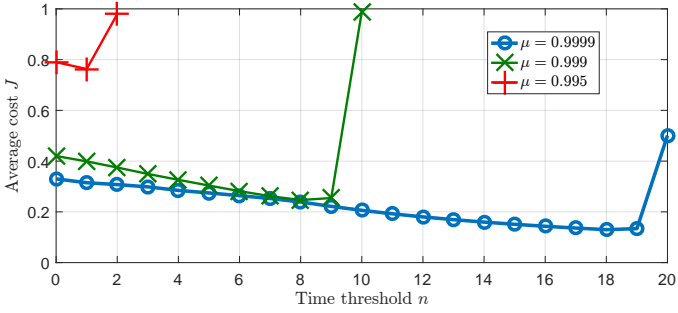}
  \caption{Average power consumed $J_{\mathrm{PT}}(p^*_n,n)$ using the optimal transmission power $p^*_n$ for the given $n$ and $\mu$.}\label{fig:noCSI}
% \vspace{0.3cm} 
\end{figure}

\subsection{Unsaturated inversion and $\epsilon$-loss policies}
Next, in Figure \ref{fig:Jvsh9}, we plot the cost function using an $\epsilon$-loss constant power policies and unsaturated channel inversion policies for $n=9$, $P_S=0$ and for some feasible values of $\bar{h}$. Here, we apply the optimal values of $p$ for constant power policies according to Proposition \ref{prop:etacbounds} with $\epsilon=0.99$, and the optimal $\kappa$ for inversion policies based on Theorem \ref{th:unsat}. We note that the communication energy cost $J$ is minimized for $\bar{h}=0.65$ and using a channel inversion policy results in a cost of $0.29$ compared to a cost of $0.57$ using a constant power policy, i.e., the cost is almost halved. 

We observe that increasing $n$ is not always good, despite seemingly transmitting less often. This is because of the larger channel threshold that is feasible with a small $n$. This property is demonstrated in Figure \ref{fig:Jvsn}, where we plot the minimum cost achievable using unsaturated inversion policies (by optimizing $\bar{h},\kappa$) for all feasible values of $n$ and various values of $P_S$ using Theorem~\ref{th:unsat}. However, it is important to note that this behavior occurs due to the distribution of the channel and different distributions may change the results presented here. We also observe that when $P_S$ is large, using a larger time threshold $n$ is more efficient. 

\begin{figure}[ht]
  \vspace*{1em}
  %\centering
 \hspace{-3mm}\includegraphics[width=0.53\textwidth,trim={0 9cm 0 9cm},clip]{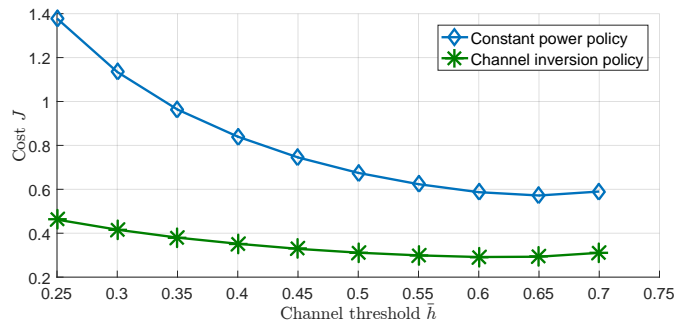}
  \caption{The cost $J$ plotted for the optimal 
 $p$ (under a constant power policy) or $\kappa$ (under a channel inversion policy) for feasible values of $\bar{h}$ with $n=9$.}\label{fig:Jvsh9}
% \vspace{0.3cm} 
\end{figure}

\begin{figure}[ht]
  \vspace*{1em}
  %\centering
 \hspace{-3mm}\includegraphics[width=0.53\textwidth,trim={0 9cm 0 9cm},clip]{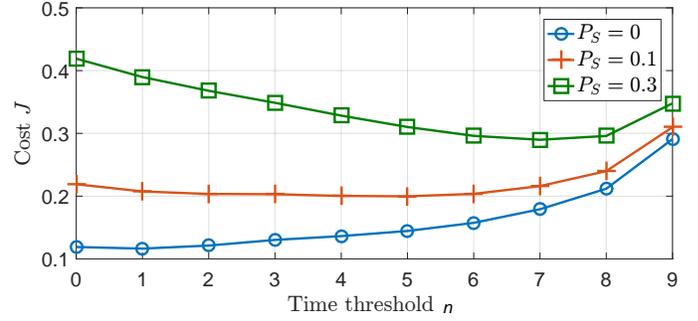}
  \caption{The cost $J$ plotted for the feasible values of $n$ and various values of the sensing power $P_S$, with the optimal parameters $\bar{h},p$ and $\kappa$ being selected for a channel inversion policy.}\label{fig:Jvsn}
% \vspace{0.3cm} 
\end{figure}

\color{black}
\subsection{Control performance vs communication cost}
\label{sec:tradeoff}
In this subsection, we perform an actual simulation of \eqref{eq:numsys} to study the trade-off between communication and control performance. While $\mu$ gives a guaranteed property on the convergence speed, its value may be subject to conservatism compared to the actual speed. We thus compare the expected time steps (averaged for $10^5$ simulations) for $|x|^2$ to reach a ball of radius $10^{-6}$ with a random initialization on $x(0)$ satisfying $|x(0)|=1$. In Fig. \ref{fig:timevsmu}, we plot the results of this numerical experiment for four communication policies: the baseline which uses $P(t)=P_{\max}$ for all $t$, and the remaining three being the optimal unsaturated inversion policy for $\mu \in \{0.99,0.995,0.999\}$. % We have also provided Table \ref{tab:tvm} comparing the communication costs applying various policies for a certain desired values of $\mu$.
Naturally, a higher $\mu$ implies a higher convergence time. Surprisingly, we discover that the control performance in simulation deteriorates by a very small amount in actual simulations compared to the theoretical bound which scales with $-1/\log(\mu)$. 
\begin{comment}
\begin{table}[h]
    \centering
    \begin{tabular}{|c|c|c|c}
    Convergence $\mu$ & \multicolumn{2}{c|}{Communication cost $J$} \\
    \hline
    & Pure time threshold & Inversion \\
    \hline
     0.9857    & 10  & N/A\\
     \hline
     0.99 & 2.1 &  1.95\\ \hline
     0.995 & 0.77 & 0.7\\ \hline
     0.999 &  0.24 &  0.2 \\ \hline
   %  0.9999 & 0.17 &  0.18 & 0.11 \\
    % \hline
    \end{tabular} 
     \vspace{0.1cm}
    \caption{Control performance vs Communication costs for various policies}
    \label{tab:tvm}
\end{table}
\end{comment}
\begin{figure}[ht]
  \vspace*{1em}
  %\centering
 \hspace{-3mm}\includegraphics[width=0.5\textwidth,trim={2cm 9cm 2cm 9cm},clip]{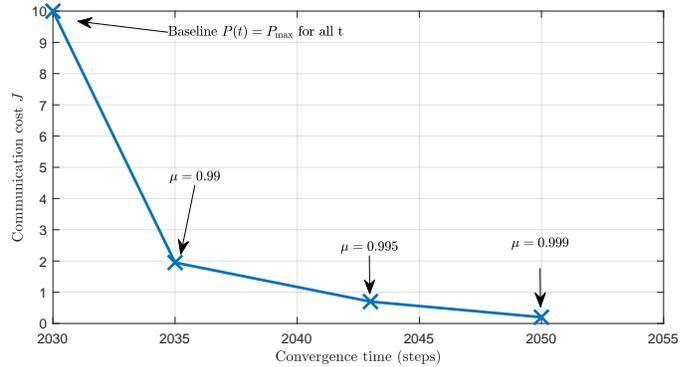}
  \caption{We plot $J$ on the Y axis and on the X axis, the expected time steps for $x$ to hit a ball of radius $10^{-6}$ while starting from a random point in the circle of radius $1$ in our simulations. The communication cost is optimized for a certain $\mu$ as indicated in the figure. }\label{fig:timevsmu}
% \vspace{0.3cm} 
\end{figure}
\color{black}
\section{Conclusions}
\label{sec:concl}
We have proposed a framework to design a class of energy-efficient transmission power policies for nonlinear WNCS. The main objective of this work is to minimize the average transmission power while maintaining the stability of the WCNS in a stochastic sense. We provide expressions to compute the optimal transmission power for control-relevant performance criteria under the proposed policy based on time and channel thresholds. Numerical simulations show that when the power required to sense the channel is ignored, a pure channel threshold policy can be optimal. However, when the power required to sense the channel is also accounted for, a suitable time threshold will significantly reduce the average communication cost. 

\appendix
It is convenient to define the function $F:\Rl ^{s_\chi \times\Zp} \to \Rl^{s_\chi}$ with the following recursion, for any $\chi \in \mathbb{R}^{s_{\chi}}$ and $\ell\in \Zp $,
\begin{equation}
    F(\chi,\ell):=\left\{ \begin{array}{ll}
    f_S(\chi)     & \text{for } \ell=1  \\
   f_U(F(\chi,\ell-1))     & \text{otherwise.}
    \end{array} \right. \label{eq:mainsys3}
\end{equation}
This allows us to write the dynamics between successful communication instants as $\chi(t_k + \ell) = F(\chi(t_k),\ell_k)$ for all $t_k \in \mathcal{T}$ and $\ell_k \in \{1,\dots,t_{k+1}-t_k\}$.  

We next provide a lemma on the evolution of $\tau(t)$ along solutions to \eqref{eq:mainsys3b} where $\eta$ denotes the packet success rate when $\tau(t) \geq n+1$.
 
\begin{lemma}
For any $\eta \in (0,1)$, the sequence $\tau(1),\tau(2),\dots$ is a Markov chain and $\tau(t)$ converges exponentially to the following stationary distribution:
\begin{equation}
\Pr(\tau(t)=j)= \frac{\eta}{n \eta +1},
\end{equation}
for all $j \in \{1,2,\dots,n\}$ and
\begin{equation}
    \Pr(\tau(t) \geq n+1) = \frac{1 - \eta}{n\eta +1}.
\end{equation} \label{lem:tau} \hfill $\Box$
\end{lemma}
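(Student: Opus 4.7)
The plan is to establish the Markov property first, then solve the balance equations for the stationary distribution, and finally prove the exponential convergence via a regeneration argument.

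First, I would verify that $(\tau(t))_{t\geq 0}$ is a time-homogeneous Markov chain on $\mathbb{Z}_{>0}$. Examining \eqref{eq:TCTPC} and \eqref{eq:taud}, the event $\{\tau(t+1)=1\}$ occurs iff a transmission is attempted at time $t$ (which requires $\tau(t)\geq n+1$) and it succeeds (which, given the attempt, occurs with probability $\eta$ by construction, since $h(t)$ is i.i.d.\ by item (iii) of SA\ref{ass:CSITpd} and $\eta$ is precisely the success probability averaged over $h(t)\geq\bar h$). Therefore, conditional on $\tau(t)$, the law of $\tau(t+1)$ is determined: for $j\leq n$, $\tau(t+1)=j+1$ deterministically; for $j\geq n+1$, $\tau(t+1)=1$ with probability $\eta$ and $\tau(t+1)=j+1$ with probability $1-\eta$. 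Time-homogeneity follows from the i.i.d.\ property of the $h(t)$.

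Second, I would compute the stationary distribution $\pi$ by solving the balance equations $\pi_j=\sum_k\pi_k P_{kj}$. For $j\in\{2,\dots,n+1\}$, only state $j-1$ maps to $j$ (deterministically), so $\pi_j=\pi_{j-1}$, hence $\pi_1=\pi_2=\cdots=\pi_{n+1}=:c$. For $j\geq n+2$, only state $j-1\geq n+1$ maps to $j$ (with a failed attempt), so $\pi_j=(1-\eta)\pi_{j-1}$, giving $\pi_{n+1+k}=(1-\eta)^k c$ for all $k\geq 0$. The balance equation at $j=1$, namely $\pi_1=\eta\sum_{k\geq n+1}\pi_k$, is automatically consistent: $\eta\sum_{k\geq 0}(1-\eta)^k c=c$. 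Normalization $\sum_{j\geq 1}\pi_j=nc+\sum_{k\geq 0}(1-\eta)^k c=nc+c/\eta=1$ pins down $c=\eta/(n\eta+1)$, which yields the claimed expressions.

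Third, for the exponential convergence, I would exploit the regenerative structure induced by visits to state $1$. Starting from state $1$, the chain follows a deterministic path through $\{2,\dots,n+1\}$ and then spends a Geometric$(\eta)$ number of additional steps before returning to $1$; thus the return time $R$ to state $1$ has distribution $\Pr(R=n+k+1)=\eta(1-\eta)^k$, $k\geq 0$, with exponential tail $\Pr(R>t)\leq(1-\eta)^{t-n-1}$ for $t>n$. Coupling two copies of the chain, one started from an arbitrary distribution and the other from $\pi$, by synchronizing them at their first joint visit to state $1$, gives $\|\Pr(\tau(t)=\cdot)-\pi\|_{TV}\leq\Pr(T_{\mathrm{couple}}>t)$, which decays geometrically in $t$ at rate $(1-\eta)$ after the transient of length $n$. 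Equivalently, a Doeblin minorization argument applies: for any $t\geq n+1$ and any $j\geq 1$, $P^{t}(j,1)\geq\eta$, which classically implies $\|P^t(j,\cdot)-\pi\|_{TV}\leq(1-\eta)^{\lfloor t/(n+1)\rfloor}$.

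The main obstacle is dealing with the countable state space rigorously for the convergence step: one cannot simply invoke a finite-state Perron--Frobenius argument. I expect the cleanest route to be the regeneration/coupling argument sketched above, since the simple structure of the chain (deterministic drift on $\{1,\dots,n+1\}$ followed by a geometric escape from the tail) makes the tail bound on return times to state $1$ completely explicit, and no sophisticated Lyapunov-drift machinery is needed.
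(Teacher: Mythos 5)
Your derivation follows the same skeleton as the paper's: write down the transition kernel of $\tau$ ($j\mapsto j+1$ deterministically for $j\le n$; $j\mapsto 1$ w.p.\ $\eta$ and $j\mapsto j+1$ w.p.\ $1-\eta$ for $j\ge n+1$), solve for the invariant distribution, and argue exponential convergence. Where you differ is in the level of rigor at both ends. For the stationary distribution, the paper jumps from the transition probabilities to the identity $\eta^{-1}\Pr(\tau=1)+\sum_{i=1}^n\Pr(\tau=i)=1$ by "basic rules of probability," whereas you solve the full balance equations, including the geometric tail $\pi_{n+1+k}=(1-\eta)^k c$ and the consistency check at state $1$; this is more complete. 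For the convergence, the paper invokes irreducibility and aperiodicity and cites a reference; on a countably infinite state space that alone does not give a \emph{geometric} rate, so your regeneration/coupling argument (exponential tail of the return time to state $1$, coupling at the first joint visit) is a genuinely more careful, self-contained route. An even cleaner fix, which you do not mention, is that all states $j\ge n+1$ have identical transition behavior, so the chain lumps to a finite, irreducible, aperiodic chain on $\{1,\dots,n+1\}$ and finite-state Perron--Frobenius applies after all.

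Two caveats. First, your Doeblin minorization "$P^{t}(j,1)\ge\eta$ for all $t\ge n+1$ and all $j$" is false: from state $1$ the chain cannot return to $1$ before time $n+1$ and $P^{n+2}(1,1)=(1-\eta)\eta<\eta$, while $P^{n+1}(n+1,1)=(1-\eta)^{n}\eta$; a uniform minorization exists only with a smaller constant and a suitably chosen $m$, so rely on the coupling argument (or the lumping) instead. Second, and more importantly, your normalization gives $\Pr(\tau\ge n+1)=c/\eta=1/(n\eta+1)$ together with $\pi_1=\cdots=\pi_{n+1}=c$, which \emph{contradicts} the second displayed formula in the statement: $(1-\eta)/(n\eta+1)$ is $\Pr(\tau\ge n+2)=\sum_{k\ge1}(1-\eta)^k c$, not $\Pr(\tau\ge n+1)$ (indeed $n\cdot\frac{\eta}{n\eta+1}+\frac{1-\eta}{n\eta+1}\ne1$). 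The value $1/(n\eta+1)$ is the one actually used elsewhere in the paper (e.g.\ in the proofs of the cost expressions), so your computation is the correct one; you should not have written that it "yields the claimed expressions," but rather flagged the off-by-one in the statement.
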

\begin{proof} \textcolor{black}{Equation \eqref{eq:mainsys3b} allows us to evaluate $\Pr(\tau(t+1)| \tau(t))$. Since $P(t)=0$ when $\tau(t) \leq n$, and as $\psi(0)=0$ from SA\ref{ass:CSITpd}, we have that
\begin{equation}
    \Pr\Big(\tau(t+1)=\tau(t) +1 | \tau(t) \leq  n\Big) = 1.
\end{equation}
as communication is never attempted for these values of $\tau(t)$.} For all $\ell \in \Zo$ we have
\begin{equation}
\begin{array}{lll}
     \Pr\Big(\tau(t+1)=n+ \ell +2 | \tau(t) = n+1+\ell\Big)& = &1-\eta   \\
 \Pr\Big(\tau(t+1)=1 | \tau(t) = n+1+\ell\Big)& = &\eta
\end{array}
\end{equation}
in view of SA\ref{ass:CSITpd} and the fact that $\eta$ denotes the packet success rate when $\tau(t) \geq n+1$. \textcolor{black}{Since $\tau(t)$ is always in one of these states, we have
\begin{equation}
  \Pr(\tau(t) \geq n+1) + \sum_{i=1}^n  \Pr(\tau(t)=i) = 1.
\end{equation}
Applying one of the basic rules of probability, we derive
\begin{equation}
 \eta^{-1}  \Pr(\tau(t) =1) + \sum_{i=1}^n  \Pr(\tau(t)=1) = 1.
\end{equation}
This allows us to evaluate
\begin{equation}
    \Pr(\tau(t)=1) = \frac{\eta}{n\eta+1},
\end{equation}
when the Markov chain is in steady state, which will also be the steady state probabilities for $\tau(t)=i$ for any $i \in \{1,\dots,n\}$.} Additionally, 
\begin{equation}
    \Pr(\tau(t) \geq n+1) = \frac{1-\eta}{n\eta+1}.
\end{equation}
Since the Markov chain is trivially irreducible (as $\tau(t)$ always cycles between states) and aperiodic for all $\eta \in (0,1)$, we have exponential convergence to the steady state distribution from \cite{rosenthal1995convergence}.
\end{proof}

\subsection{Proof of Proposition \ref{prop:etastab}}
\label{app:p1}
Given $\mu \in (a_S,\min\{1,a_U\})$, $n \in \Zo$ and $\eta \in [0,1]$, we first note that in view of SA\ref{ass:alpha1} and \eqref{eq:mainsys3b}, we have
\begin{equation}
    V(F(\chi,i+1)) \leq a_S a_U^i V(\chi) \label{eq:tta1a0}
\end{equation}
for all $\chi \in \mathbb{R}^{s_{\chi}}$ and $i \in \mathbb{Z}_{\geq 0}$.

Let $\chi_0 \in \mathbb{R}^{s_{\chi}}$ and consider $\chi(t)$ the solution to \eqref{eq:mainsys3b} initialized at $\chi_0$. Recall that due to the structure of \eqref{eq:TCTPC}, once a transmission is successful, the next transmission is attempted only after $n$ steps. Therefore, we define for all $t \in \Zo$
\begin{equation}
    \mathcal{T}_U(t) : = \Big\{i \in \{1,2,\dots,t-1\}\, |\, \tau(i+1)\geq 2+n\Big\},
\end{equation}
the set of all time instances where transmissions were attempted, but communication failed before $t$. This implies that for any $t \in \Zp$ and any $ i \in \mathcal{T}_U(t)$,
\begin{equation}
  V(\chi(i+1)) \leq a_U V(\chi(i)) \label{eq:setu}
\end{equation}
in view of SA\ref{ass:alpha1}.

On the other hand, we define the set of all time instances where transmission was successful before $t$ for all $t \in \Zo$ as
\begin{equation}
    \mathcal{T}_q(t) : = \Big\{i \in \{1,2,\dots,t-1\}\, |\, \tau(i+1)=1\Big\},
\end{equation}
because whenever a communication occurs at some time, we have $\tau(t+1)=1$ according to \eqref{eq:taud}. This allows us to use SA\ref{ass:alpha1} to write, for any $t\in \Zp$ and any $ i \in \mathcal{T}_q(t)  $,
\begin{equation}
   V(\chi(i+\ell)) \leq a_U^n a_S V(\chi(i)). \label{eq:sets}
\end{equation}
for all $\ell \in \{1,\dots,n+1\}$. Combining \eqref{eq:setu} and \eqref{eq:sets}, we can write
\begin{equation}
    V(\chi(t)) \leq a_S a_U^n\prod_{i=0}^{t-1} G(i) V(\chi_0) 
\end{equation}
where $G(i):=a_U$ if $i \in \mathcal{T}_U(t)$, $G(i):=a_S a_U^n$ if $i \in \mathcal{T}_q(t)$ and $G(i):=1$ otherwise. This can be done because we have $a_S \leq a_S a_U \leq \dots \leq a_S a_U^n$. Taking the logarithm on both sides, we have for any $t \in \Zp$,
\begin{equation}
    \log(V(\chi(t))) \leq \log( V(\chi(0)))+\sum_{i=1}^t \log(G(i))  .
\end{equation}

Note that under \eqref{eq:mainsys3b}, the clock state sequence can be seen as a Markov chain with steady state distribution as stated in Lemma \ref{lem:tau}. Recall that we initialize $\tau(1)=1$. This allows us to express $G(i)$ as a random variable, and its distribution can be calculated as follows,
\begin{equation}
    \begin{array}{lll}
    \Pr(G(i)=a_S a_U^n)     & \leq &\Pr(\tau(i+1)=1)  \\
     \Pr(G(i)=a_U)    &  \geq & \Pr(\tau(i+1)=n+1) 
    \end{array}
\end{equation}
for all $i \in \{0,\dots,t-1\}$ for any $t \in \mathbb{Z}_{>0}$.

The results of Lemma \ref{lem:tau} provides $\Pr(\tau(i+1)=1)$ and we have
\begin{equation}\begin{array}{rl}
  \Ex[\log(V(\chi(t)))]   \leq   &   \log( V(\chi_0)) \\
  & + t \Big( \Pr(\tau(t)>n+1) \log(a_U))\\
     & + \Pr(\tau(t)=n+1) \log(a_S a_U^n) \Big) \\
    \leq &  \log( V(\chi_0))+ t \beta(n,\eta)
\end{array}
\end{equation}

Taking the exponential on both sides, we get the convergence rate
\begin{equation}
    \Ex[V(\chi(t))] \leq \beta(n,\eta)^t V(\chi_0) \label{eq:munconv}
\end{equation}
Since $\beta(n,\eta)<\mu$, property \eqref{eq:munconv} automatically implies that
\begin{equation} \begin{array}{ll}
  \sum_{t=0}^\infty \Ex[\underline{\alpha}(|\chi(t)|) ]    & \leq       \sum_{t=0}^\infty \Ex[V(\chi(t))] \\
     &  \leq \frac{1}{1 - \mu} V(\chi_0) < \infty
     \end{array}
 \end{equation}
satisfying condition \eqref{eq:ststab} in Definition \ref{def:ss} as $\mu<1$ and concluding our proof. 

\subsection{Proof of Lemma \ref{prop:minpar}}
\label{app:p3}
\begin{proof}
Recall that we consider $a_U > a_S$ in SA\ref{ass:alpha1}. Due to the property of logarithms, if $\log(\beta(n,\eta))$ for any $n \in \Zo$, is monotonically decreasing in $\eta$, then so is $\beta(n,\eta)$. Taking the logarithm of \eqref{eq:muofp} on both sides, we obtain
\begin{equation}
    \log(\beta(n,\eta)) = \frac{\log(a_U)+ \log(a_S a_U^{n-1})\eta }{ 1+ n \eta}.
\end{equation}
Taking the derivative w.r.t. $\eta$, we have

\begin{equation} \begin{array}{c}
    \dfrac{\log(a_S a_U^{n-1})}{1+ n \eta} - \dfrac{n (\log(a_U)+ \log(a_S a_U^{n-1})\eta )}{(1+ n \eta)^2}  \\ 
      = \dfrac{ \log(a_S a_U^{n-1}) - n \log(a_U) }{(1+ n \eta)^2} = \dfrac{ \log(a_S)   -  \log(a_U) }{(1+ n \eta)^2} 
\end{array}
\end{equation}
which is negative as $\log(a_U)>\log(a_S)$. Therefore, $\beta(n,\eta)$ is monotonically decreasing in $\eta$.

Next, observe that we have $\beta(n,0)=a_U$. Since, we consider $\mu < a_U$, if $\beta(n,1)< \mu$, we have $\beta(n,1) \leq  \mu \leq \beta(n,0) $. Since $\beta(n,\cdot)$ is continuous by definition, there exists at least one $\eta$ such that $\beta(n,\eta)=\mu$. Finally, due to $\beta(n,\cdot)$ being monotonous, this $\eta$ is unique. Additionally if $\beta(n,1) > \mu$, then $\beta(n,\eta)>\mu$ for any $\eta \in [0,1]$. 

Finally, we look at $\eta_C(\bar{h},p)$ in \eqref{eq:etaCP} and notice that each term in the summation is increasing w.r.t. $p$ in view of item (i) SA\ref{ass:CSITpd}. Therefore, $\eta_C(\bar{h},\cdot)$ is an increasing function. A similar logic applies to $\eta_I(\bar{h},\kappa)$. 
\end{proof}

\subsection{Proof of Proposition \ref{th:optPnoCSI}}
\label{app:p4}
Since we know that $P(t)$ is a stochastic process under policy \eqref{eq:TCTPC}, we can rewrite the cost \eqref{eq:costperk} as
\begin{equation}
     J_{\mathrm{PT}}(p,n) = \mathbb{E} [P(t)] = p   \Pr(\tau(t) \geq n+1).
\end{equation}

Applying Lemma \ref{lem:tau}, we substitute for $\Pr(\tau(t) \geq n+1)$ which provides \eqref{eq:TTcost1}.

%If $a_U \leq \mu$, the system is stable even without the communication network. In this case, using $P(t)=0$ for all $t$ is trivially the best solution.
For $n=0$, we trivially have that the function $J_{\mathrm{PT}}(p,0)=p$, which is strictly increasing in $p$. For all other cases, we will have $\underline{p}_n>0$. In order to study the properties of $J_{\mathrm{PT}}(p,n)$ w.r.t $p$, we look at the properties of the inverse cost which is never zero for $p>0$ defined as
\begin{equation}
\xi_n(p) = \frac{1}{J_{\mathrm{PT}}(p,n)}  = \frac{1}{p} + n\frac{1-e(p)}{p}   
\end{equation}

Due to the stability requirement, we only look at $\xi_n(p)$ for all $p \in [\underline{p}_n,P_{\max}]$, $n \geq 1$. Note that due to item (i) of SA\ref{ass:CSITpd}, we have that $1-e(p)$ is a sigmoidal function of $p$. We can therefore apply Theorem 1 in \cite{rodriguez2003analytical}, to conclude that the term $\frac{1-e(p)}{p}$ is quasi-concave and takes the value $0$ at the limits when $p \to 0$ and $p \to \infty$. The term $\frac{1-e(p)}{p}$ therefore has a unique maximum at say $p^u$, is strictly increasing in the interval $(0,p^u)$ and is decreasing in the interval $(p^u, \infty)$. 

Now, we can consider the two cases.
\begin{enumerate}
\item There is no local extremum for $\xi_n(p)$ for $p >0$.
    \item There exists at least one $p^*$ which is a local extremum satisfying
\begin{equation}
    \frac{\partial \xi(p^*)}{\partial p} =  \frac{-n e'(p^*)}{p^*} - \frac{1+n(1-e(p^*)}{p^{*2}}  =0. \label{eq:powersadle}
\end{equation}
\end{enumerate}
    
In the first case, since $\xi_n(\cdot)$ is differentiable and has no local extremum, $\frac{\partial \xi(p)}{\partial p}$ is never $0$ for $p>0$. Note that the function $\xi_n$ is decreasing in the interval $(p^u,\infty)$ for any $n$, and so $p \mapsto \xi_n(p)$ must be decreasing for all $p >0$. Since $\xi_n(p)$ is differentiable and $\frac{\partial \xi(p^*)}{\partial p}$ is never $0$, $\xi_n(p)$ is always decreasing, which implies that $J_{\mathrm{PT}}(p,n)$ is always increasing.

For the second case, there exists at least one $p^*$ satisfying \eqref{eq:powersadle}. Then, we evaluate
\begin{equation}\begin{array}{ll}
 \dfrac{\partial^2 \xi(p)}{\partial p^2} =   & \dfrac{-n e''(p)}{p}    + \dfrac{2}{p^2} \left( \dfrac{1+n(1-e(p))}{p}   + n e'(p) \right)
\end{array}
\end{equation}

However, note that at a local extremum, the above expression will have the second term vanishing due to \eqref{eq:powersadle}, implying that
\begin{equation}\begin{array}{ll}
 \dfrac{\partial^2 \xi(p^*)}{\partial p^2} =   & \dfrac{-n e''(p^*)}{p^*}
\end{array}
\end{equation}
which is positive when $e$ is concave and negative when $e$ is convex. From item (ii) of SA\ref{ass:CSITpd}, we know that $(1-e)$ is initially convex and then concave. This means that $\xi$ has only local minima initially (when $1-e$ is convex), and then only local maxima. Since $\xi(p)$ is continuous and differentiable, this is only possible if the local minimum and maximum are unique. 

\bibliographystyle{unsrt} 
\bibliography{bib_global}\vskip 0pt plus -1fil

\begin{thebibliography}{10}

\bibitem{ahlen2019toward}
A.~Ahlen, J.~Akerberg, M.~Eriksson, A.~J. Isaksson, T.~Iwaki, K.~H. Johansson,
  S.~Knorn, T.~Lindh, and H.~Sandberg.
\newblock Toward wireless control in industrial process automation: A case
  study at a paper mill.
\newblock {\em IEEE Control Systems Magazine}, 39(5):36--57, 2019.

\bibitem{wu2016green}
J.~Wu, S.~Rangan, and H.~Zhang.
\newblock {\em Green communications: theoretical fundamentals, algorithms, and
  applications}.
\newblock CRC Press, 2016.

\bibitem{rault2014energy}
T.~Rault, A.~Bouabdallah, and Y.~Challal.
\newblock Energy efficiency in wireless sensor networks: A top-down survey.
\newblock {\em Computer Networks}, 67:104--122, 2014.

\bibitem{hossain2012green}
E.~Hossain, V.K. Bhargava, and G.P. Fettweis.
\newblock {\em Green radio communication networks}.
\newblock Cambridge University Press, 2012.

\bibitem{varma2013energy}
V.S. Varma, S.~Lasaulce, M.~Debbah, and S.E. Elayoubi.
\newblock An energy-efficient framework for the analysis of $\text{MIMO}$ slow
  fading channels.
\newblock {\em IEEE Transactions on Signal Processing}, 61(10):2647--2659,
  2013.

\bibitem{goodman2000power}
D.~Goodman and N.~Mandayam.
\newblock Power control for wireless data.
\newblock {\em IEEE Personal Communications}, 7(2):48--54, 2000.

\bibitem{de2010energy}
N.~Cardoso De~Castro, C.~Canudas De~Wit, and K.H. Johansson.
\newblock On energy-aware communication and control co-design in wireless
  networked control systems.
\newblock In {\em 2nd IFAC Workshop on Distributed Estimation and Control in
  Networked Systems, Annecy, France}, pages 49--54, 2010.

\bibitem{rabi2008event}
M.~Rabi and K.~H. Johansson.
\newblock Event-triggered strategies for industrial control over wireless
  networks.
\newblock In {\em Wireless Internet Conference WICON’08, November 17-19,
  2008, Maui, Hawaii, USA}. ACM, 2008.

\bibitem{leong2017remote}
A.~S. Leong, D.~E. Quevedo, D.~Dolz, and S.~Dey.
\newblock Transmission scheduling for remote state estimation over packet
  dropping links in the presence of an eavesdropper.
\newblock {\em IEEE Transactions on Automatic Control}, 64(9):3732--3739, Sept.
  2019.

\bibitem{quevedo2013power}
D.~E. Quevedo, J.~{\O}stergaard, and A.~Ahlen.
\newblock Power control and coding formulation for state estimation with
  wireless sensors.
\newblock {\em IEEE Transactions on Control Systems Technology},
  22(2):413--427, 2013.

\bibitem{li2013optimal}
Y.~Li, D.~E. Quevedo, V.~Lau, and L.~Shi.
\newblock Optimal periodic transmission power schedules for remote estimation
  of $\text{ARMA}$ processes.
\newblock {\em IEEE Transactions on Signal Processing}, 61(24):6164--6174,
  2013.

\bibitem{molin2009lqg}
A.~Molin and S.~Hirche.
\newblock On $\text{LQG}$ joint optimal scheduling and control under
  communication constraints.
\newblock In {\em IEEE Conference on Decision and Control, held jointly with
  Chinese Control Conference}, pages 5832--5838, 2009.

\bibitem{gatsis2014optimal}
K.~Gatsis, A.~Ribeiro, and G.J. Pappas.
\newblock Optimal power management in wireless control systems.
\newblock {\em IEEE Transactions on Automatic Control}, 59(6):1495--1510, 2014.

\bibitem{bala2020LQpower}
M.~Balaghiinaloo, D.J. Antunes, V.S. Varma, R.~Postoyan, and W.P.M.H. Heemels.
\newblock \text{LQ}-power consistent control: Leveraging transmission power
  selection in control systems.
\newblock In {\em IFAC Europen Control Conference 2020}.

\bibitem{varma2016energy}
V.S. Varma and R.~Postoyan.
\newblock Energy efficient time-triggered control over wireless sensor/actuator
  networks.
\newblock In {\em IEEE Conference on Decision and Control}, pages 2727--2732,
  2016.

\bibitem{varma2019tac}
V.~S. Varma, A.~M. de~Oliveira, R.~Postoyan, I-C. Morarescu, and J.~Daafouz.
\newblock Energy-efficient time-triggered communication policies for wireless
  networked control systems.
\newblock {\em IEEE Transactions on Automatic Control}, 2019.

\bibitem{maass2020stochastic}
A.~I. Maass, D.~Ne{\v{s}}i{\'c}, V.~S. Varma, R.~Postoyan, and S.~Lasaulce.
\newblock Stochastic stabilisation and power control for nonlinear feedback
  loops communicating over lossy wireless networks.
\newblock In {\em IEEE Conference on Decision and Control}, pages 1866--1871,
  2020.

\bibitem{maass2021stochastic}
A.~I. Maass, D.~Ne{\v{s}}i{\'c}, V.~S. Varma, R.~Postoyan, and S.~Lasaulce.
\newblock Wireless networked control systems: Stochastic stability and power
  control.
\newblock {\em submitted to IEEE Transactions on Automatic Control}, 2021.

\bibitem{varma2020time}
V.~S. Varma, R.~Postoyan, D.~E. Quevedo, and I-C. Mor{\u{a}}rescu.
\newblock Time-based transmission power policies for energy-efficient wireless
  control of nonlinear systems.
\newblock In {\em IEEE Conference on Decision and Control}, pages 1854--1859,
  2020.

\bibitem{nesic2004input}
D.~Ne{\v{s}}i{\'c} and A.R. Teel.
\newblock Input-output stability properties of networked control systems.
\newblock {\em IEEE Transactions on automatic control}, 49(10):1650--1667,
  2004.

\bibitem{hespanha2008lyapunov}
J.~P. Hespanha, D.~Liberzon, and A.~R Teel.
\newblock Lyapunov conditions for input-to-state stability of impulsive
  systems.
\newblock {\em Automatica}, 44(11):2735--2744, 2008.

\bibitem{rodriguez2003analytical}
V.~Rodriguez.
\newblock An analytical foundation for resource management in wireless
  communication.
\newblock In {\em IEEE Global Telecommunications Conference}, volume~2, pages
  898--902, 2003.

\bibitem{rappaport1996wireless}
T.~S Rappaport.
\newblock {\em Wireless communications: principles and practice}, volume~2.
\newblock New Jersey: Prentice hall PTR, 1996.

\bibitem{quevedo2014stochastic}
D.~E. Quevedo, V.~Gupta, W-J. Ma, and S.~Y{\"u}ksel.
\newblock Stochastic stability of event-triggered anytime control.
\newblock {\em IEEE Transactions on Automatic Control}, 59(12):3373--3379,
  2014.

\bibitem{shen2003optimal}
Z.~Shen, J.G Andrews, and B.L Evans.
\newblock Optimal power allocation in multiuser ofdm systems.
\newblock In {\em IEEE Global Telecommunications Conference}, volume~1, pages
  337--341, 2003.

\bibitem{Jiang-Wang-scl02(converse)}
Z.-P. Jiang and Y.~Wang.
\newblock {A converse Lyapunov theorem for discrete-time systems with
  disturbances}.
\newblock {\em Systems \& Control Letters}, 45(1):49--58, 2002.

\bibitem{Laila-Nesic-cdc-02(small-gain)}
D.S. Laila and D.~Ne\v{s}i\'c.
\newblock {Lyapunov based small-gain theorem for parameterized discrete-time
  interconnected ISS systems}.
\newblock In {\em IEEE Conference on Decision and Control, Las Vegas, U.S.A.},
  pages 2292--2297, 2002.

\bibitem{ozarow1994information}
L.H. Ozarow, S.~Shamai, and A.D. Wyner.
\newblock Information theoretic considerations for cellular mobile radio.
\newblock {\em IEEE Transactions on Vehicular Technology}, 43(2):359--378,
  1994.

\bibitem{rosenthal1995convergence}
J.~S. Rosenthal.
\newblock Convergence rates for {M}arkov chains.
\newblock {\em SIAM Review}, 37(3):387--405, 1995.

\end{thebibliography}

\end{document}